\newtheorem{theorem}{Theorem}
\newtheorem{lemma}{Lemma}
\newcommand{\congest}{\ensuremath{\mathcal{CONGEST}}\xspace}
\newcommand{\whp}{w.h.p.\@\xspace}
\newcommand{\fontmathtext}[1]{\mathsf{#1}}%
\DeclareRobustCommand{\alorithm}[1]{{\ensuremath{{\cal A}_\fontmathtext{#1}}}\xspace}
\DeclareRobustCommand{\AlgoHC}{\alorithm{HC}}
\title{A Distributed Algorithm for Finding Hamiltonian
  Cycles in Random Graphs in $O(\log n)$ Time} 
\author{Volker Turau\\
 	Hamburg University of Technology, Institute of Telematics\\Am Schwarzenberg-Campus 3, 21073 Hamburg, Germany\\ turau@tuhh.de}
 \date{}
\begin{document}

\maketitle

\begin{abstract}
  It is known for some time that a random graph $G(n,p)$ contains \whp
  a Hamiltonian cycle if $p$ is larger than the critical value
  $p_{crit}= (\log n + \log \log n + \omega_n)/n$. The determination
  of a concrete Hamiltonian cycle is even for values much larger than
  $p_{crit}$ a nontrivial task. In this paper we consider random
  graphs $G(n,p)$ with $p$ in $\tilde{\Omega}(1/\sqrt{n})$, where
  $\tilde{\Omega}$ hides poly-logarithmic factors in $n$. For this
  range of $p$ we present a distributed algorithm \AlgoHC that finds
  \whp a Hamiltonian cycle in $O(\log n)$ rounds. The algorithm works
  in the synchronous model and uses messages of size $O(\log n)$ and
  $O(\log n)$ memory per node.
\end{abstract}






\setcounter{page}{1}
\section{Introduction}

Surprisingly few distributed algorithms have been designed and
analyzed for random graphs. To the best of our knowledge the only work
dedicated to the analysis of distributed algorithms for random graphs
is \cite{Levy:05,Krzywdzinski:15,Pandurangan:2018}. This is rather
surprising considering the profound knowledge about the structure of
random graphs available since decades \cite{Boll:01,Frieze:15}. While
algorithms designed for general graphs obviously can be used for
random graphs the specific structure of random graphs often allows to
prove asymptotic bounds that are far better. In the classical
Erd\H{o}s and R\'{e}nyi model for random graphs a graph $G(n,p)$ is an
undirected graph with $n$ nodes where each edge independently exists
with probability $p$ \cite{Erdos:1959}. The complexity of algorithms
for random graphs often depends on $p$, e.g., Krzywdzi{\'n}ski et al.\
\cite{Krzywdzinski:15} proposed a distributed algorithm that finds
\whp a coloring of $G(n,p)$ with $18np$ colors in $O(\ln\ln p^{-1})$
rounds.

In this work we focus on finding Hamiltonian cycles in random graphs.
The decision problem, whether a graph contains a Hamiltonian cycle, is
NP-complete. It is a non-local graph problem, i.e., it is required to
always consider the entire graph in order to solve the problem. It is
impossible to solve it in the {\em local neighborhoods}. For this
reason there is almost no work on distributed algorithms for finding
Hamiltonian cycles in general graphs. On the other hand it is well
known that $G(n,p)$ contains \whp a Hamiltonian cycle, provided $p \ge
p_{crit}=(\log n + \log \log n + \omega(n))/n$, where $\omega(n)$
satisfies $\lim_{n\to\infty}\omega(n)=\infty$ \cite[Th.~8.9]{Boll:01}.
There is a large body of work on sequential algorithms for computing
\whp a Hamiltonian cycle in a random graph (e.g.
\cite{posa:76,Angluin:1979,Shamir:1983,Boll:87,Thomason:89}).

We are only aware of two distributed algorithms for computing
Hamiltonian cycles in random graphs. The algorithm by Levy et al.\
\cite{Levy:05} outputs \whp a Hamiltonian cycle provided
$p =\omega(\sqrt{\log n}/n^{1/4})$. This algorithm works in
synchronous distributed systems, terminates in linear worst-case
number of rounds, requires $O(n^{3/4 + \epsilon})$ rounds on
expectation, and uses $O(n)$ space per node. The algorithm of
Chatterjee et al. \cite{Pandurangan:2018} works for
$p \ge c \log n/n^\delta$ ($0<\delta \le 1$) and has a run time of
$\tilde{O}(n^\delta)$.

The search for a distributed algorithm for a Hamiltonian cycle is
motivated by the usage of virtual rings for routing in wireless
networks \cite{malkhi:2009,turau:2017}. A virtual ring is a directed
closed path involving each node of the graph, possibly several times.
Virtual rings enable routing with constant space routing tables,
messages are simply forwarded along the ring. The downside is that
they may incur a linear path stretch. To attenuate this, distributed
algorithms for finding {\em short} virtual rings have been proposed
\cite{helaryVirtualRingDFT:1987,turau:2017}. Hamiltonian cycles are
the {\em shortest} possible virtual rings and therefore of great
interest. Short virtual rings are also of interest for all token
circulation techniques as discussed in \cite{Franceschelli:11}. Kim et
al. discuss the application of random Hamiltonian cycles for
peer-to-peer streaming \cite{Kim:12}. Rabbat et al.\ present
distributed optimization algorithms for in-network data processing,
aimed at reducing the amount of energy and bandwidth used for
communication based on Hamiltonian cycles \cite{Rabbat:05}, see also
\cite{Sommer:08}.

This paper uses the synchronous \congest model, i.e., each message
contains at most $O(\log n)$ bits. Furthermore, each node has only
$O(\log n)$ bits of local memory. Without these two assumptions there
is a very simple solution provided the nodes have unique identifiers.
First a BFS-tree rooted in a node $v_0$ is constructed. Then the
adjacency list of each node is convergecasted to $v_0$ which applies a
sequential algorithm to compute \whp a Hamiltonian path (see
Sec.~\ref{sec:stateArt}). The result is broadcasted into the graph and
thus each node knows its neighbor in the Hamiltonian cycle. This can
be achieved in $O(diam(G))$ rounds. Note that if $p=\omega_n \log n/n$
then \whp $diam(G(n,p))= O\left(\frac{\log n}{\log np}\right)$
\cite{Chung:01,Frieze:15}. In particular for $p$ in
$\tilde{\Omega}(1/\sqrt{n})$ \whp the diameter of $G(n,p)$ is constant
\cite{Boll:81}.

For the stated restrictions on message size and local storage we
propose an algorithm that terminates in a logarithmic number of rounds,
this is a significant improvement over previous work
\cite{Levy:05,Pandurangan:2018}. Our contribution is the distributed
algorithm \AlgoHC, its properties can be summarized as follows.


\begin{theorem}\label{thm:main-contribution}
  Let $G(n,p)$ with $p \ge (\log n)^{3/2}/\sqrt{n}$ be a random graph.
  Algorithm \AlgoHC computes in the synchronous model \whp a
  Hamiltonian cycle for $G$ using messages of size $O(\log n)$.
  \AlgoHC terminates in $O(\log n)$ rounds and uses $O(\log n)$
  memory per node. 
\end{theorem}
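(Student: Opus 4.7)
The plan is to construct the Hamiltonian cycle by a distributed doubling scheme: maintain a collection of vertex-disjoint paths covering $V$, halve their number in each phase by pairwise merging, and after $O(\log n)$ phases close the unique remaining path into a cycle. Each phase shall run in $O(1)$ rounds, yielding the $O(\log n)$ total bound.

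The fundamental primitive is a \emph{merger}: given two disjoint paths $P_1$ and $P_2$ with known endpoints, produce a single path on $V(P_1)\cup V(P_2)$ in $O(1)$ rounds. The merger proceeds from an endpoint $u$ of $P_1$: the node $u$ queries its $G$-neighbors for membership in $V(P_2)$; if some $v\in V(P_2)$ is a neighbor, attach via the edge $(u,v)$, and if $v$ is an interior vertex of $P_2$, apply a single P\'{o}sa rotation so that $v$ becomes the new endpoint. The merger succeeds \whp as soon as $|V(P_2)|\,p=\Omega(\log n)$, which holds whenever both paths have length at least $L^{*}=\Theta(\log n/p)=\Theta(\sqrt{n/\log n})$, thanks to the assumption $p\ge(\log n)^{3/2}/\sqrt{n}$. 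Messages carry only a constant number of vertex IDs ($O(\log n)$ bits) and each node needs to remember only its path ID, its two path-neighbors, and a bounded number of scratch registers ($O(\log n)$ bits). Pipelining along the paths keeps the round cost of one merger constant.

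Once the initial paths have length at least $L^{*}$, the doubling phase requires only $\log(n/L^{*})=O(\log n)$ rounds of pairwise merging, and the usual principle of deferred decisions (reserve a disjoint slice of the random edge set for each phase, then take a union bound over the $O(\log n)$ phases) guarantees that all mergers succeed \whp. The algorithm finishes by applying the merger primitive once more to the two endpoints of the unique remaining path, producing the Hamiltonian cycle.

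The main obstacle will be the \emph{initialization}: producing, within $O(\log n)$ rounds and the stated bandwidth and memory budgets, a collection of vertex-disjoint paths of length at least $L^{*}$ that covers \whp almost every vertex. A naive length-1 start followed by doubling fails in the short-path regime because for path lengths below $L^{*}$ a single merger has too few candidate connecting edges. The natural remedy is a short-path variant in which each endpoint simultaneously searches across an entire \emph{group} of candidate paths (so that the pool of potential connecting edges scales with the total group length rather than with one partner's length), together with an opportunistic absorption step in which the few vertices left outside any path at the end of initialization are spliced into an existing path via a neighbor that happens to be an endpoint. Showing that this initialization terminates \whp in $O(\log n)$ rounds, respects the $O(\log n)$-bit per-node memory and per-message bandwidth, and leaves no uncovered vertex is where the bulk of the technical work will lie.
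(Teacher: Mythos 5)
Your proposal takes a path--merging/doubling route (in the spirit of MacKenzie--Stout and Levy et al.), which is genuinely different from the paper's strategy of growing a single cycle by vertex insertions; unfortunately it has gaps that are not just deferred technicalities. The central one is the merger primitive. If $u$ is an endpoint of $P_1$ and $v$ is an \emph{interior} vertex of $P_2$, adding the edge $(u,v)$ does not yield a path on $V(P_1)\cup V(P_2)$: it yields $P_1$ concatenated with one of the two segments of $P_2$ incident to $v$, stranding the other segment. A P\'osa rotation cannot be ``applied'' at will to make a prescribed interior vertex $v$ an endpoint of $P_2$; it requires a specific edge from an endpoint of $P_2$ to a path-neighbor of $v$, which exists only with probability $p=o(1)$. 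Your condition $|V(P_2)|\,p=\Omega(\log n)$ only guarantees that $u$ has \emph{some} neighbor on $P_2$, which is not enough. The natural correct $O(1)$-edge-change merger --- splicing $P_1$ between two consecutive vertices $v,v'$ of $P_2$ via edges from the two endpoints of $P_1$ --- succeeds \whp only when $|V(P_2)|\,p^2=\Omega(\log n)$, i.e.\ for path length $\Omega(n/(\log n)^2)$ rather than your $L^{*}=\Theta(\sqrt{n/\log n})$; so the doubling scheme stalls for all but the last couple of phases. (The paper avoids this because its growth step has $\Theta(n)$ outside vertices each probing $\Theta(|C|)$ cycle edges for a triangle, so the relevant expectation is $n\cdot |C|\cdot q^2$, not $2\cdot|V(P_2)|\cdot p^2$.)

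Two further points. First, even where a merger is available, executing it in $O(1)$ rounds under the stated model is precisely the hard part you wave away with ``pipelining'': after any rotation or orientation flip, every node of the reversed segment must swap its successor pointer and the absorbed path's nodes must learn the new path identity and endpoints, and a path may have $\Theta(n)$ nodes while each node holds only $O(\log n)$ bits. The paper needs a coordinator, a depth-$3$ BFS tree, and a monotone numbering of the cycle that is reflected arithmetically ($x\mapsto f+l-x$) to realize one segment reversal in $O(1)$ rounds --- and even then it performs these operations strictly sequentially, which is why it reserves them for the last $3\log n$ vertices. Your scheme would need $\Theta(n/L^{*})$ concurrent mergers per phase, with the attendant conflicts when rotations overlap. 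Second, you explicitly concede that the initialization (covering almost all of $V$ by disjoint paths of length $L^{*}$ in $O(\log n)$ rounds) is unsolved, and the pairing of partners for each merger round is also left without a coordination mechanism. As it stands the argument does not establish the theorem.
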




\subsection{Related Work}\label{sec:stateArt}
P{\'o}sa showed already in 1976 that almost all random graphs with $cn
\log n$ edges possess a Hamiltonian cycle \cite{posa:76}. Later
Koml{\'o}s et al.\ determined the precise threshold $p_{crit}$ for the
existence of a Hamiltonian cycle in a random graph \cite{Komlos:1983}.
A sequential deterministic algorithm that works \whp at this threshold
requiring $O(n^{3+o(1)})$ time is due to Bollob{\'a}s et al.\
\cite{Boll:87}. For larger values of $p$ or restrictions on the
minimal node degree, more efficient algorithms are known
\cite{Angluin:1979,Frieze:2015}. The algorithm of Thomason finds a
Hamiltonian path or shows that no such path exists provided $p\ge 12
n^{-1/3}$ \cite{Thomason:89}.

The above cited algorithms were all designed for the sequential
computing model. Some exact algorithms for finding Hamiltonian cycles
in $G(n,p)$ on parallel computers have been proposed
\cite{Frieze:1987}. The first operates in the EREW-PRAM model and uses
$O(n\log n)$ processors and $O(\log^2 n)$ time, while the second one
uses $O(n\log^2 n)$ processors and $O((\log\log n)^2)$ time in the
P-RAM model. MacKenzie and Stout proposed an algorithm for CRCW-PRAM
machines that operates in $O(\log^\ast n)$ expected time and requires
$n/\log^\ast n$ processors \cite{MacKenzie:1993}. Apart from the above
mentioned work \cite{Levy:05,Pandurangan:2018} we are not aware of any
other distributed algorithm for this problem.




There are several approaches to construct a Hamiltonian cycle. The
approach used by Levy et al.\ at least goes back to the work of
MacKenzie and Stout \cite{MacKenzie:1993}. They initially construct a
small cycle with $\Theta(\sqrt{n})$ nodes. As many as possible of the
remaining nodes are assorted in parallel into $\sqrt{n}$
vertex-disjoint paths. During the final phase, each path and each
non-covered vertex is patched into the initial cycle.

The second approach is used in the proofs to establish the critical
value $p_{crit}$ (e.g., \cite{posa:76,Krivelevich:2016}) and all
derived sequential algorithms (e.g., \cite{Boll:87}). Initially a
preferably long path is constructed, e.g., using a depth first search
algorithm \cite{Frieze:2015}. This path is extended as long as the
node at the head of the path has a neighbor that is not yet on the
path. Then the path is {\em rotated} until it can be extended again. A
rotation of the path cuts off a subpath beginning at the head,
reverses the order of the subpath's nodes, and reattaches the subpath
again. The procedure stops when no sequence of rotations leads to an
extendable path. The algorithm in \cite{Pandurangan:2018} follows this
approach.






\section{Computational Model and Assumptions}
This work employs the synchronous \congest model of the
\emph{distributed message passing model} \cite{Peleg:2000}, i.e., each
message contains at most $O(\log n)$ bits. Furthermore, each node has
only $O(\log n)$ bits of local memory. The communication network is
represented by an undirected graph $G=(V,E)$, where $V$ is a set of
$n$ processors (nodes) and $E$ represents the set of $m$ bidirectional
communication links (edges) between them. Each node carries a unique identifier. Communication between nodes
is performed in synchronous rounds using messages exchanged over the
links. Upon reception of a message, a node performs local computations
and possibly sends messages to its neighbors. These operations are
assumed to take negligible time.

The prerequisite of Algorithm \AlgoHC is a distinguished node $v_0$
which is the starting point of the Hamiltonian cycle and acts as a
coordinator in the final phases of \AlgoHC. The results proved in this
work hold {\em with high probability} (\whp) which means with
probability tending to $1$ as $n \rightarrow \infty$. The
probabilities $p$ considered in this paper always depended on $n$,
e.g., $p \ge (\log n)^{3/2}/\sqrt{n}$, and we always assume that $\lim_{n \rightarrow
  \infty} p=0$. 



\section{Informal Description of Algorithm \AlgoHC}
\label{sec:overview}

Algorithm \AlgoHC operates in sequential phases, each of them succeeds
\whp The first two phases
last $O(\log n)$ rounds. Each subsequent phase requires a constant
number of rounds only. Phase 0 lasts $3(3\log n -1)$ rounds and
constructs a path $P$ of length $3\log n$ starting in $v_0$. In the
next $3\log n $ rounds Phase 1 closes $P$ into a cycle $C$ of length at
most $4\log n$. The following $16 \log(n)$ phases are called the
middle phases. In each of those phases the number of nodes in $C$
is increased. The increase is by a constant factor until $C$ has $n/7$
nodes. Afterwards, the increase declines roughly linearly until $C$ has
$n-3\log n$ nodes. In each middle phase the algorithm tries to
concurrently integrate as many nodes into $C$ as possible. This is
achieved by replacing edges ($v,w)$ of $C$ by two edges $(v,x)$ and
$(x,w)$, where $x$ is a node outside of $C$. At the end of the middle
phases \whp $C$ has more than $n-3\log n$ nodes.

The integration of the remaining $3\log n$ nodes requires a more
sophisticated algorithm. This is done in the final phases. The idea is
to remove two edges -- not necessarily adjacent -- from $C$ and insert
three new edges. This requires to reverse the edges of a particular
segment of $C$ of arbitrary length. Thus, this is no longer a local
operation. Furthermore, segments may overlap and hence, the
integration of several nodes can only be performed sequentially. Thus,
this task requires coordination. Node $v_0$ takes over the role of a
coordinator.

At the beginning of each final phase all nodes outside $C$ that can be
integrated report this to $v_0$, which in turn selects one of these
nodes to perform this step. For this purpose a tree routing structure
is set up, so that each node can reach $v_0$ \whp in $3$ hops. In
order for the nodes of the segment to perform the reordering
concurrently, the nodes of $C$ are numbered in an increasing order
(not necessarily consecutively) beginning with $v_0$. The assignment
of numbers is embedded into the preceding phases with no additional
overhead. The numbering is also maintained in the final $3\log n$
integration steps. In order to accomplish the integration in a
constant number of rounds -- i.e., independent of the length of the
segment -- node $v_0$ floods the numbers of the terminal nodes of the
segment to be reversed into the network. Upon receiving this
information, each node can determine if it belongs to the segment to
be reversed and can recompute its number to maintain the ordering.
Note that this routing structure requires only $O(\log n)$ memory per
node. Each of the $3\log n$ final phases lasts a constant number of
rounds.

Algorithm \AlgoHC stops when either $C$ is a Hamiltonian cycle or no
more nodes can be integrated into $C$. The first event occurs \whp


\section{Formal Description}
Algorithm \AlgoHC operates in synchronous rounds. By counting the
rounds a node is always aware in which round and therefore also in
which phase it is. Each phase lasts a known fixed number of rounds. If
the work is completed earlier, the network is idle for the remaining
rounds. This requires each node to know $n$. Algorithm \AlgoHC
gradually builds an oriented cycle $C$ starting with node $v_0$. The
cycle is maintained as a doubly linked list to support insertions. The
orientation of $C$ is administered with the help of variable $next$ --
initially $null$ -- which stores the identifier of the next node on
the cycle in clockwise order. Also, $v.next= null$ indicates that $v$
is not yet on the cycle. In the following each phase is described in
detail.



\subsection{Pre-processing}
\label{sec:preph}
The algorithm is started by node $v_0$ which executes algorithm
\textsc{Flood} \cite{Peleg:2000} to construct a BFS tree. By
Lemma~\ref{diameter} the diameter of $G$ is \whp  at most 3. Thus, in
$3$ rounds a BFS tree rooted in $v_0$ is constructed (Lemma 5.3.1,
\cite{Peleg:2000}). After a further $6$ rounds each node is aware of
$n$ the number of nodes in the network. This allows to run each phase
for the stated number of rounds.

\subsection{Phase 0}
\label{sec:ph0}

In phase 0 an oriented path $P$ starting in $v_0$ of length $3\log n$
is constructed. Phase 0 lasts $3(3\log n - 1)$ rounds. Initially
$P=\{v_0\}$ and $v_0.next=v_0$. The following steps are repeated $3\log
n - 1$ times.
\begin{enumerate}
\item The final node $v$ of $P$ sends an invitation message to all
  neighbors. All neighbors not on $P$ (i.e., nodes with $next=null$)
  respond to $v$.
\item If $v$ does not receive any response the algorithm halts.
  Otherwise $v$ randomly selects among the nodes that have responded a
  node $w$, sets $v.next := w$, informs $w$ that it is the new final
  node, and instructs $w$ to continue with phase 0. This message
  includes the id of node $v_0$, i.e., at any point in time all nodes
  of $P$ know $v_0$.
  
\end{enumerate}

\subsection{Phase 1}
\label{sec:ph1}
In phase 1 the path $P$ is extended into an oriented cycle $C$ of length
at most $4\log n$. The following steps are repeated at
most $\log n$ times. Phase 1 lasts $3\log n$ rounds.
\begin{enumerate}
\item The final node $v$ of $P$ sends an invitation message to all
  neighbors, the message contains the id of node $v_0$. All
  neighbors not on $P$ respond to $v$. The response includes the
  information whether the recipient is connected to $v_0$.
\item If $v$ does not receive any response the algorithm halts. If at
  least one responding node is connected to $v_0$, then $v$
  randomly selects such a node $w$, sets $v.next=w$, and informs $w$
  to close the cycle $C$, i.e., to set $w.next=v_0$. Otherwise $v$
  randomly selects a responding node $w$ to extend $P$ as in phase 0
  and instructs $w$ to repeat phase 1.
\item If after $\log n$ repetitions $P$ is not a cycle then the
  algorithm halts otherwise the middle phases start.
\end{enumerate}

\subsection{Middle Phases}
\label{sec:ph2}
While in the first two phases actions were executed sequentially, in
the middle phases many nodes are integrated concurrently. In each of
the subsequent phases the following steps are performed (see
Fig.~\ref{fig:mph} for an example). Each of the $16 \log n$ middle
phases is performed in three rounds.
\begin{enumerate}
\item Each node $w$ on $C$ broadcasts its own id and the id of its
  predecessor on $C$ using message $I_1$.
\item If a node $v$ outside $C$ receives a message $I_1$ from a node
  $w$ such that the predecessor of $w$ on $C$ is a neighbor of $v$, it
  inserts $w$ into the set $C_v$.
\item Each node $v$ outside $C$ with $C_v\not= \emptyset$ randomly
  selects a node $w$ from $C_v$ and sends an invitation message $I_2$
  to the  predecessor of $w$ on $C$.
\item Each node $w\in C$ that received an invitation $I_2$
  randomly selects a node $v$ from which it received an invitation,
  sets $w.next = v$, and informs $v$ with acceptance message $I_3$ to
  set its variable $next$ to the old successor $w'$ of $w$. In other
  words the edge $(w,w')$ is replaced by the edges $(w,v)$ and
  $(v,w')$.
\end{enumerate}
Individual extensions do not interfere with each other. Each node
outside $C$ gets in the last round of a middle phase at most one
request for extension and for each edge of $C$ at most one request is
sent.

\begin{figure}[h]
  \centering
\includegraphics[scale=1.0]{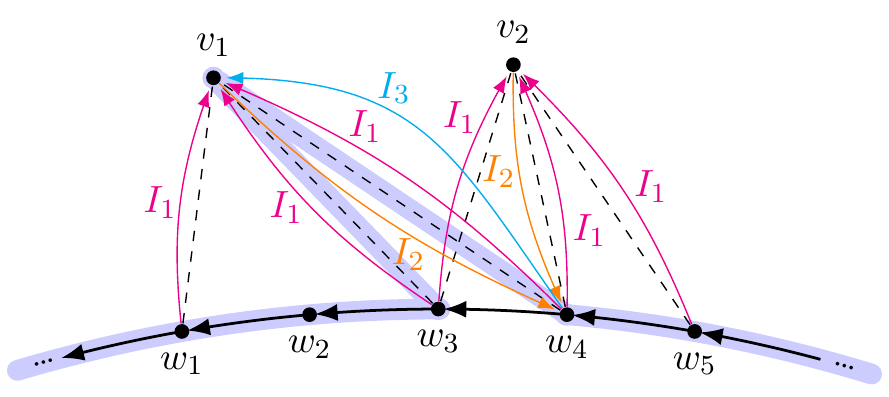}  
\caption{The integration of nodes during a middle phase:
  Nodes $w_i$ sent a message $I_1$ to all nodes outside $C$ (red
  arrows). Nodes $v_1$ and $v_2$ sent a message $I_2$ back to $w_4$;
   $v_2$ might have also selected $w_4$ and sent $I_3$ to
  $w_5$. Node $w_4$ selected $v_1$  and sent back
  message $I_3$. Edge $(w_4,w_3)$ is replaced by the edges $(w_4,v_1)$
  and $(v_1,w_3)$. The extended cycle is depicted by the blue ribbon.}
  \label{fig:mph}
\end{figure}



\subsection{Final Phases}
\label{sec:phl}

After the completion of the middle phases the cycle $C$ has \whp at
least $n-3\log n$ nodes. At that point the expected number of nodes
$v\in V\setminus C$ that send an invitation $I_2$ becomes too low to
complete the cycle. Therefore, the integration of the remaining
nodes requires a more complex integration procedure as depicted in
Fig.~\ref{fig:fph}. The procedure of the final phases is as follows.
Each node $v\in V\setminus C$ with identifier $id$ sends a message
$I_1(id)$ to each of its neighbors. A node $w_1\in C$ that receives a
message $I_1(id)$ sends a message $I_2(id)$ to its neighbor $w_2$ on
$C$ in clockwise order. If $w_2$ also received a message $I_1(id)$
(with the same id), then nodes $w_1,w_2$ and the initiating
node $v$ with identifier $id$ form a triangle. Then $v$ can be
directly integrated into $C$ as done in the middle phases. In this
case $w_1$ asks $v$ to initiate the integration step.

\begin{figure}[h]
  \centering
\includegraphics[scale=1.0]{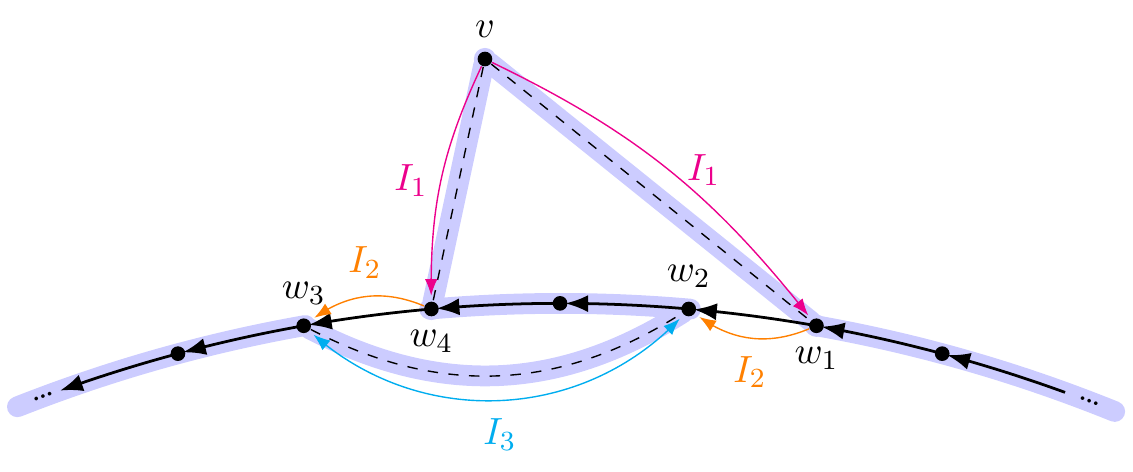}  
\caption{The integration of node $v$ into $C$ during the final phase.
  The thin red arrows indicate the flow of the messages $I_1,I_2$, and
  $I_3$ initiated by $v$. The extended cycle is depicted by the blue
  ribbon. The edges $(w_1,w_2)$ and $(w_4,w_3)$ are replaced by the
  edges $(w_1,v)$, $(v,w_4)$, and $(w_2,w_3)$. The order of the edges
  between $w_4$ and $w_2$ is reversed.}
  \label{fig:fph}
\end{figure}

Otherwise, if node $w_2$ did not receive a message $I_1(id)$, then it
sends a message $I_3(id)$ to all neighbors that are on $C$. If a node
$w_3$ on $C$ that receives this message $I_3(id)$ also received a
message $I_2(id)$ from its predecessor $w_4$ on $C$, then node $v$ can
be integrated into $C$ as shown in Fig.~\ref{fig:fph}. This is
achieved by replacing edges $(w_1,w_2)$ and $(w_4,w_3)$ from $C$ by
edges $(w_1,v)$, $(v,w_4)$, and $(w_2,w_3)$. Also, the edges on the
segment from $w_2$ to $w_4$ must be traversed in opposite order, note
that the number of nodes between $w_2$ and $w_4$ is not bounded. A
naive explicit reversing of the order of the edges on the middle
segment may require more than $O(\log n)$ rounds. Thus, we propose a
different approach.

Apart from the reversal of the edges in the middle segment this
integration can be implemented within five rounds. Node $w_3$ informs
$v$ about this integration possibility, this notification also
includes the identifiers of nodes $w_4$ and $w_2$. Furthermore, the
participating nodes $w_4,w_2$ and $w_1$ are also informed. The
approach to invert the middle segment in a constant number of rounds
is explained below.


Unfortunately there is another issue. While each node outside $C$ can
be integrated individually, these integration steps cannot be executed
concurrently. A problem arises if the segments, which are inverted
(e.g.\ from $w_2$ to $w_4$), overlap. This can result in two separate
cycles as shown in Fig.~\ref{fig:twoCycles}. Even if the integration
of the remaining nodes is performed sequentially, a problem appears if
the reversal of the middle segment is not made explicit. In this case
the nodes that receive an $I_1$ message may not have a consistent view
with respect to the clockwise order of $C$.

\begin{figure}[h]
  \centering
\includegraphics[scale=1.0]{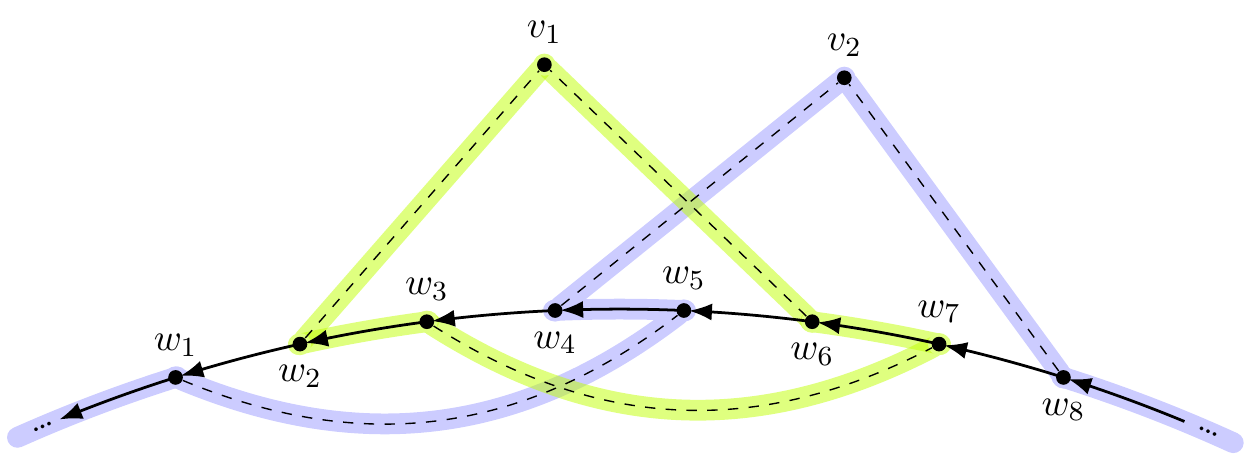}  
\caption{The depicted scenario shows that the integration of two nodes with
  overlapping segments cannot be performed concurrently as this would
  lead to two cycles (shown in green and blue). If $v_2$ would be
  integrated first, then $v_1$ can no longer be integrated, since the
  predecessor  of $w_6$ is then $w_5$ which is not connected to $w_3$.}
  \label{fig:twoCycles}
\end{figure}

The solution to the problem of interfering concurrent integrations is
to serialize all integration steps. For this purpose node $v_0$ acts
as a coordinator. In each of the final phases each node $v$ outside
$C$ first checks if can be integrated using the above described
sequence of messages $I_1$ to $I_3$. If this is the case then $v$
randomly selects one of these possibilities and informs $v_0$. This
message includes information about the four nodes on $C$ that
characterize the integration (see below for details). Node $v_0$
selects among all offers a single node $v$ and informs it.
Upon receiving the integration order, a node $v$ initialize the
integration which is completed after fives rounds. Then the
integration of the next node can start.

The solution for the second problem -- the reversal of the segment --
is based on an ascending numbering of the nodes. Such a numbering can
easily be established in the first and middle phases. During phases 0
and 1 the nodes are numbered as follows: Node $v_0$ has number $0$. In
clockwise order the nodes have numbers $n^{14}, 2n^{14}, 3n^{14},
\ldots$, $\beta n^{14}$ for some integer constant $\beta\le \lceil4
\log n\rceil$. Thus, the difference between two consecutive nodes is
$n^{14}$. During the middle phases when a node $v$ is integrated into
$C$ between two nodes with numbers $f<l$ the integrated node gets the
number $\lceil(f+l)/2\rceil$. This is an integer strictly between $f$
and $l$ as long as $|f-l|\ge 2$. If a node is integrated between $v_0$
and the node with the highest number $y$, the new number is
$y+\lceil(\beta +1)n/2\rceil$. It is straightforward to verify that
all numbers are different and are ascending along the cycle beginning
with $v_0$. The choice of the initial numbers guarantees that the
difference of the numbers of two consecutive nodes is always at least
$2$.

In case a node $v$ is integrated during the final phase it gets the
number $\lceil(n_1+n_2)/2\rceil$ as if it would be inserted between
$w_1$ and $w_2$ with numbers $n_1$ and $n_2$ (see Fig.~\ref{fig:fph}).
The numbers of the nodes between $w_2$ and $w_4$ need to be updated
such that overall the numbers are ascending. When a node can be
integrated it includes in the notification message to $v_0$ the
numbers of the end nodes of the segment that would be reversed if this
node is integrated, i.e., the numbers of $w_2$ and $w_4$ (referred to
as $f$ and $l$ in the following). Afterwards, when $v_0$ informs the
selected node it distributes a message to all nodes in the network
that also includes the numbers $f$ and $l$. A node receiving this
message checks if its own number $x$ is between $f$ and $l$. In this
case it changes its number to $f+l-x$. Thus, the numbers of the nodes
in the segment are reflected on the mid point of the segment (see
Fig.~\ref{fig:rings}). Each node that changes its number also updates
it next pointer to the other neighbor on $C$. Also nodes $v,w_1$, and
$w_2$ update their next pointer.

\begin{figure}[h]
  \centering
\includegraphics[scale=1.0]{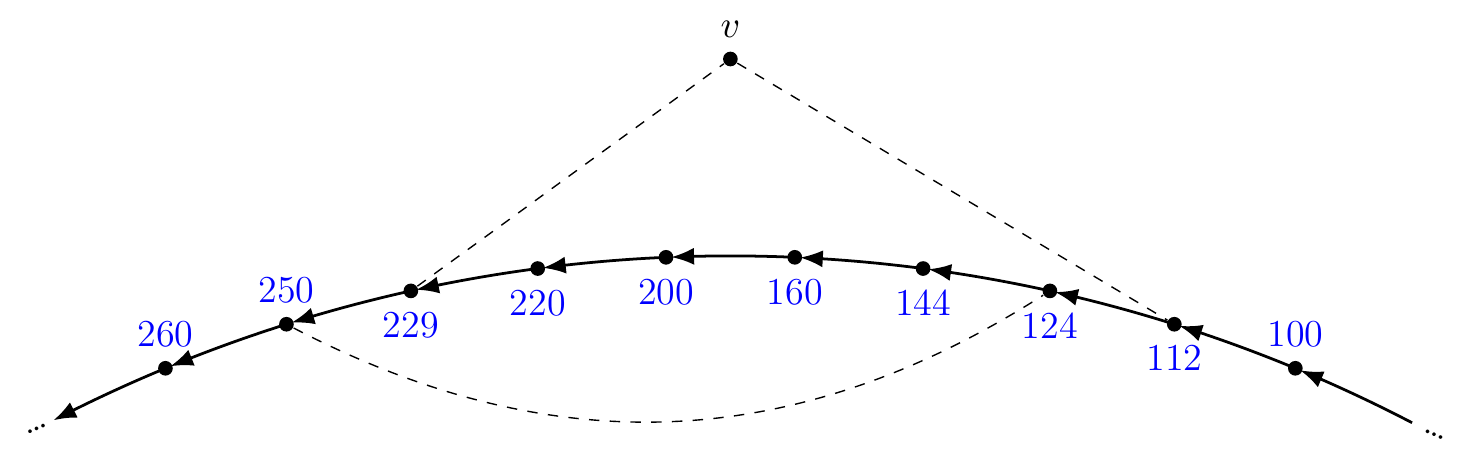}  %
\includegraphics[scale=1.0]{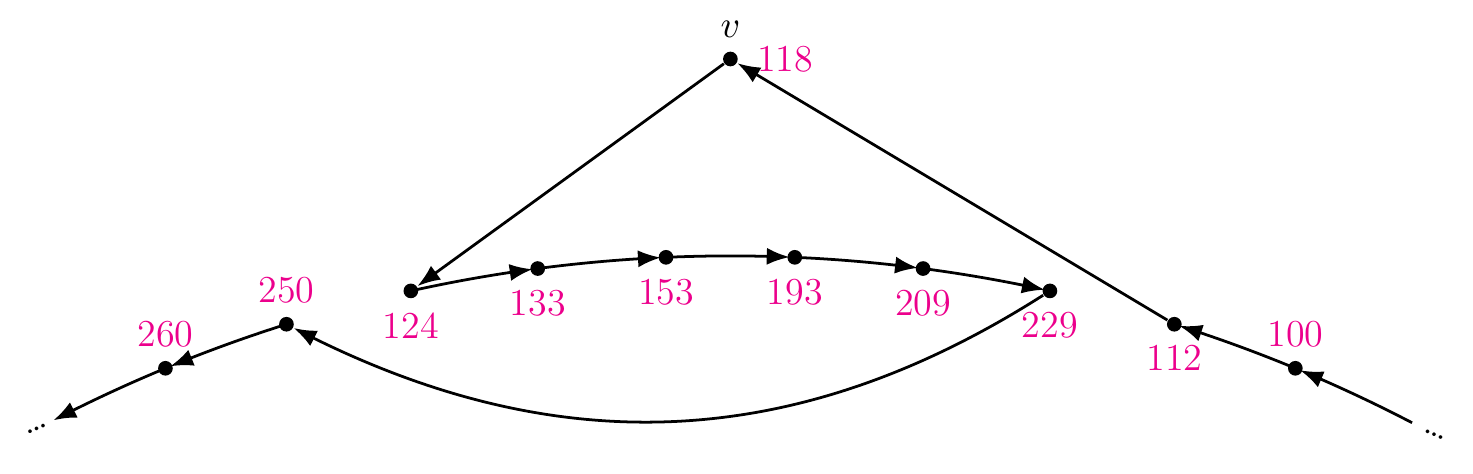}  
\caption{Node $v$ is to be integrated into $C$. The nodes $w_1$ and
  $w_2$ have the numbers $f=124$ and $l=229$. Node $v$ will receive
  number $\lceil(112+124)/2\rceil= 118$. Upon receiving the message
  form node $v_0$, nodes with a number between 124 and 229 change
  their numbers. The left sides shows the old numbers and the right
  side the new numbers.}
  \label{fig:rings}
\end{figure}

This procedure results in a cycle including $v$ with a numbering that
is consistent with the orientation. Thus, when the integration phase
of the next node starts, cycle $C$ is in a consistent state. To carry
out this phase a short route from each node to $v_0$ and vice versa is
needed. This is provided by the BFS tree constructed in the
pre-processing phase: Each node reaches $v_0$ in at most 3 hops. Thus,
each final phase lasts 11 rounds.

\section{Analysis of Algorithm \AlgoHC}
\label{sec:analysis}

This section proves the correctness and analyzes the complexity of the
individual phases and proves the main theorem. First, we prove that
\AlgoHC produces the numbering that guarantees that the final phases
work correctly. Afterwards the individual phases are analyzed. Some of
the results are proved for values of $p$ less than $(\log
n)^{3/2}/\sqrt{n}$ to make them more general.

\begin{lemma}\label{lem:numb}
  At the end of each phase each node has a different number and the
  numbers are ascending beginning with number $0$ for node $v_0$ in
  clockwise order.
\end{lemma}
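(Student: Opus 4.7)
The plan is to prove the lemma by induction on the phase number, strengthening the statement with the invariant that any two clockwise-consecutive numbers differ by at least $2$. This invariant is what makes the bisection rule $\lceil(f+l)/2\rceil$ produce a value strictly between $f$ and $l$, so it must be threaded through every step.

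For the base case, at the end of phases 0 and 1 the cycle is built sequentially, and the $k$-th node appended receives number $k\cdot n^{14}$. The resulting numbers $0, n^{14}, 2n^{14}, \ldots, \beta n^{14}$ are distinct, ascending in clockwise order starting at $v_0$, and have consecutive differences exactly $n^{14}\ge 2$.

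For the inductive step covering a middle phase, I would analyse each integration separately. When a node $v$ is inserted on an edge $(w,w')$ with numbers $f<l$ (or with the wrap-around rule when $w'=v_0$, where the assigned value lies strictly beyond the current maximum), it is given $\lceil(f+l)/2\rceil$. The hypothesis $l-f\ge 2$ yields $f<\lceil(f+l)/2\rceil<l$, so the new number is distinct from its neighbours and the two new subintervals still have positive length. To preserve the strengthened invariant $\ge 2$, I would bound the depth of iterated bisections on any initial interval by the total number of middle phases (each edge can be bisected at most once per phase, since each node of $C$ accepts at most one $I_2$ request), which is $O(\log n)$; combined with the initial spacing $n^{14}$, this keeps every gap well above $2$. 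Concurrency within a single phase is harmless because the algorithm guarantees that each edge of $C$ receives at most one insertion per phase, so the new numbers produced in the same phase lie in pairwise-disjoint open intervals $(f,l)$ and are therefore mutually distinct and ordered consistently with the new clockwise order of $C$.

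For the inductive step covering a final phase, only one integration takes place per phase (serialised by $v_0$), so a single step suffices. The new node $v$ is given $\lceil(n_1+n_2)/2\rceil$ which, by the same argument, sits strictly between $n_1$ and $n_2$. The segment from $w_2$ to $w_4$, whose numbers run from $f$ to $l$ in ascending order, has its clockwise orientation reversed, and each number $x$ on it is remapped to $f+l-x$. The map $x\mapsto f+l-x$ is an order-reversing involution of $[f,l]$, so when the segment is read in the new (reversed) clockwise direction the sequence of new numbers is again strictly ascending, and the values remain inside $[f,l]$ and hence do not collide with numbers outside the segment. Consecutive differences inside the reversed segment are merely permuted, so the invariant $\ge 2$ is inherited. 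The main obstacle I expect is precisely the depth-of-bisection bound used above: ruling out that a long chain of concurrent middle-phase bisections on a single original interval could shrink the gap below $2$. It is handled by noting that the chain length is bounded by the number of middle phases and that the exponent $14$ in the initial spacing $n^{14}$ was chosen to absorb this many halvings with room to spare, while the final-phase reflections do not shrink gaps at all.
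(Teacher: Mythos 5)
Your proposal is correct and follows essentially the same route as the paper's proof: start from the initial spacing $n^{14}$, observe that each gap is (roughly) halved at most once per phase so that $O(\log n)$ phases cannot drive any gap below $2$, check that $\lceil(f+l)/2\rceil$ is strictly interior when the gap is at least $2$, and verify that the reflection $x\mapsto f+l-x$ yields an ascending sequence along the reversed segment without leaving the interval $[f,l]$. The only difference is presentational: you phrase the reversal step abstractly as an order-reversing involution that permutes consecutive differences, where the paper writes out the explicit chain of inequalities $f<\lceil(f+n'_1)/2\rceil<n_k<\cdots<n_1<l$, and the paper makes the gap-decay bound quantitative via its equation $d/2^i-(1-1/2^i)$ where you argue it qualitatively.
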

\begin{proof}
  After phase 1 starting with node $v_0$ the nodes have the numbers
  $n^{14}, 2n^{14}, 3n^{14}, \ldots$, $\beta n^{14}$, i.e., the
  difference between the numbers of two neighboring nodes on $C$ is
  $n^{14}$. A node $v$ that is inserted between two nodes with
  integral numbers $x$ and $y$ in middle phase gets the number
  $\lceil(x+y)/2\rceil$. Let $x<y$. If $x+y$ is even then
  $|x-\lceil(x+y)/2\rceil| = |y-\lceil(x+y)/2\rceil| =d/2$. If $x+y$
  is odd then $|x-\lceil(x+y)/2\rceil| =(d+1)/2$ and
  $|y-\lceil(x+y)/2\rceil| =(d-1)/2$. This yields that the distance
  $d$ between two consecutive numbers is approximately at most cut in
  half, i.e., the smaller part is at least $(d-1)/2$. After $i$ middle
  phases the distance between to numbers is at least

  \begin{equation}
    \label{eq:dist}
\frac{d}{2^i}-\left(1-\frac{1}{2^i}\right)
\end{equation}

  Since there are $16\log n$ middle phases the distance between two
  numbers is $n^{14}/2^{16\log n} -(1 - 1/2^{16\log n})> 2^{(3\log n)+1}$.
  This implies that after the middle phases the numbering of the nodes
  satisfies the stated condition.

  Let $v$ be a node that is inserted in a final phase into $C$. Assume
  that the smallest distance between the numbers of two consecutive
  nodes on $C$ is at least $2$. Consider Fig.~\ref{fig:fph} for
  reference. Let $f$ (resp.\ $l$) the number of $w_1$ (resp.\ $w_3$)
  at the beginning of the corresponding final phase. Denote the nodes
  between $w_2$ and $w_4$ by $w'_1, \ldots, w'_k$ with $w_2=w'_1$ and
  $w_4=w'_k$. Furthermore, let $n'_1,\dots, n'_k$ be the numbers of
  these nodes. Thus, \[f < n'_1 < \ldots < n'_k < l\] The order of
  these nodes on $C$ at the end of the phase will be $w_1, v, w'_k,
  \ldots, w'_1, w_3$. Denote by $n_i$ the new number of node $w'_i$,
  i.e., $n_i=n'_1+n'_k-n'_i$. Thus, we need to prove \[f < \lceil(f+
  n'_1)/2\rceil < n_k < n_{k-1} < \ldots < n_1< l\] Since $n'_1 > f+1$
  it follows $f < \lceil(f+ n'_1)/2\rceil$ and since $n_k=
  n'_1+n'_k-n'_k=n'_1>f+1$ it follows $\lceil(f+ n'_1)/2\rceil < n_k$.
  Furthermore, $n'_i< n'_{i+1}$ implies $n_{i+1} = n'_1+n'_k-n'_{i+1}<
  n'_1+n'_k-n'_{i}=n_{i}$. Finally, $n_1 = n'_1+n'_k-n'_1 = n'_k< l$.
  
  As shown above at the end of the middle phases $d>2^{(3\log n) +1}$.
  Hence, after the last of the $3\log n$ final phases we have $d> 1$
  by equation~(\ref{eq:dist}). Thus, the numbers of all nodes are
  different and ascending.
\end{proof}

The challenge in proving properties of iterative algorithms on random
graphs is to organize the proof such that one only slowly uncovers the
random choices in the input graph while constructing the desired
structure, e.g., a Hamiltonian cycle. This is done in order to cleanly
preserve the needed randomness and independence of events that
establish the correctness proof. The {\em coupling technique} is well
know to solve this problem (\cite{Frieze:15}, p. 5). For
$\gamma \in \mathbb{N}$ let
$\hat{p} = 1- (1 -p)^{{1}/{\gamma\log n}}$. Then
$p = 1-(1-\hat{p})^{\gamma\log n}$. Thus $G(n,p)$ is equal to the
union of $\gamma\log n$ independent copies of $G(n,\hat{p})$. For
$p = (\log n)^{3/2}/\sqrt{n}$ we
have 
\[\left(1 - \frac{\sqrt{\log n}}{\gamma\sqrt{n}}\right)^{\gamma\log n} =  e^{((\log n)^{3/2}/\sqrt{n}) \;\log(1- \sqrt{\log n}/\gamma\sqrt{n}) \gamma\sqrt{n}/\sqrt{\log n}}\ge e^{-(\log n)^{3/2}/\sqrt{n}} \ge 1 - \frac{(\log n)^{3/2}}{\sqrt{n}}\]
hence $\hat{p} \ge \sqrt{\log n}/\gamma\sqrt{n}$ and thus,
\[ \bigcup_{i = 1}^{\gamma\log n} G(n,\sqrt{\log n}/\gamma\sqrt{n})
  \subseteq G(n,p).\] We superimpose $\gamma\log n$ independent copies
of $G(n,\sqrt{\log n}/\gamma\sqrt{n})$ and replace any double edge
which may appear by a single one. In the following proof in each phase
we will uncover a new copy of $G(n,\sqrt{\log n}/\gamma\sqrt{n})$.
There will be $21\log n$ phases, thus $\gamma = 21$. We set
$q =\sqrt{\log n}/\gamma\sqrt{n}$ for the rest of this paper. All but
the final phases also work for values of $p$ slightly smaller than
$(\log n)^{3/2}/\sqrt{n}$ and thus smaller values of $q$ (i.e.,
$q=1/\gamma\sqrt{n}$ for $p=\log n/\sqrt{n}$). This is reflected in
the following proofs.

For $i\ge 0$ let $G^i$ be the union of $i$ independent copies of
$G(n,q)$. In phase $i$ the constructed cycle $C$ consists of edges
belonging to $G^i$. The subsequent proofs use the following fact: The
probability that any two nodes of $V$ are connected with an edge from
$G^{i+1}\setminus G^i$ is $q$. Thus, in each phase a new copy of
$G(n,q)$ is revealed. In each phase we consider the nodes outside $C$.
For each such node we consider {\em unused} edges incident to it, each
of those exist with probability $q$ independent of the choice of $C$,
because $C$ consist of edges of other copies of $G(n,q)$. Some of
these unused edges may also exist in $G^i$, but that does not matter.


\subsection{Phase 0}
Phase 0 sequentially builds a path $P$ by randomly choosing a node to
extend $P$. Even for $p=\log n/ n$ this allows to build paths of
length $\Omega(\sqrt{n})$ in time proportional to the length of $P$.
Since we aim at a runtime of $O(\log n)$ the following lemma
suffices to prove that \whp phase 0 terminates successfully.

\begin{lemma}\label{lem:ph0} If $q\ge \log n/\gamma n$
   phase 0 completes \whp after $3\log n$ rounds
  with a path of length $3\log n$.
\end{lemma}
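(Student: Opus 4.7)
The plan is the standard deferred-decisions argument: at each of the $3\log n - 1$ sequential extension attempts in phase~0, bound the probability that the current endpoint $v$ of $P$ fails to see any neighbor outside $P$, and then take a union bound over all attempts. By the coupling setup described just before this subsection, phase~0 consumes a single fresh copy of $G(n,q)$ that is independent of the remaining phases. At each extension step, the endpoint $v$ queries every edge from $v$ in this fresh copy; and since all edges queried in previous steps of phase~0 are incident to earlier endpoints (the only such edge incident to $v$ is the one back to its predecessor, revealed one step earlier), the principle of deferred decisions gives that, conditional on the history, every edge between $v$ and any other vertex is still present independently with probability $q$.

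At any step, $|P| \le 3\log n$, so the number of candidate vertices outside $P$ is at least $n - 3\log n$. The probability that $v$ has no neighbor among these in the fresh copy is therefore at most
\[
(1-q)^{n - 3\log n} \;\le\; \exp\!\bigl(-q(n - 3\log n)\bigr) \;\le\; \exp\!\bigl(-(\log n)(1 - o(1))/\gamma\bigr) \;=\; n^{-1/\gamma\,+\,o(1)},
\]
where the middle inequality uses the hypothesis $q \ge \log n/(\gamma n)$. A union bound over the at most $3\log n$ extension attempts shows that the probability that phase~0 ever halts for lack of a neighbor is at most $3(\log n)\,n^{-1/\gamma + o(1)} = o(1)$, since $\gamma$ is a constant and a $\log n$ factor is absorbed by any polynomial decay in~$n$. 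Hence \whp phase~0 produces a path of the required length; the round count follows because each of the $3\log n - 1$ sequential extensions is implemented in a constant number of rounds of the \congest model.

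The only delicate point is the independence argument: both the identity of the current endpoint and the set of vertices already on $P$ are random and depend on the history of revealed edges and of the algorithm's internal coin flips. What saves us is that the algorithm of phase~0 never queries any edge incident to $v$ before the step in which $v$ becomes the endpoint (apart from the already-used edge to its predecessor), so conditionally on the entire prior history the edges from $v$ to the remaining $n - |P|$ candidate vertices are still i.i.d.\ Bernoulli$(q)$, which is all the displayed calculation needs. Everything else is routine bookkeeping.
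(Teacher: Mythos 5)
Your argument is correct and takes essentially the same route as the paper: the paper lower-bounds the success probability by $\prod_x(1-(1-q)^{n-x})\ge(1-(1-q)^{n-3\log n})^{3\log n}$ and invokes an appendix limit lemma, which is exactly your union bound $3\log n\cdot(1-q)^{n-3\log n}\to 0$ seen from the other side. (One cosmetic quibble: edges from $v$ to \emph{all} earlier endpoints, not just its predecessor, are revealed before $v$ becomes the endpoint, since each earlier endpoint invited all its neighbors; but those nodes all lie on $P$, so your conclusion that the edges to the $n-|P|$ outside candidates are still fresh Bernoulli$(q)$ is unaffected.)
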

\begin{proof}
  The probability that an end node of $P$ does not receive a response
  is equal to at least $(1-q)^{n-c}$, where $c$ is the number of nodes already
  in $P$. Thus, the probability to find a path of length $3\log n$ is
  \[\prod_{x=1}^{3\log n}(1 - (1-q)^{n-x}) \ge (1 - (1-q)^{n-3\log
      n})^{3\log n}.\] By Lemma~\ref{lem:levy} (see Appendix)
  $\lim_{n \rightarrow \infty} (1 - (1-q)^{n-3\log n})^{3\log n}=1$,
  this proves the lemma.
\end{proof}

\subsection{Phase 1}
Phase 1 sequentially tries to extend $P$ into a cycle $C$ in at most
$3\log n$ rounds.
\begin{lemma}\label{lem:ph1}
  If $q\ge 1/\gamma\sqrt{n}$ phase 1 finds \whp in $3\log n$ rounds a cycle
  with at most $4\log n$ nodes.
\end{lemma}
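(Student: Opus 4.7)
The plan is to mirror the approach of Lemma~\ref{lem:ph0}, using the coupling framework to expose a fresh independent copy $G_t$ of $G(n,q)$ in each of the (at most) $\log n$ iterations of phase~1. At iteration $t$, let $v$ denote the current end of the path $P$, and let $m = n - |P| \ge n - 4\log n$. For each $w \notin P$, the edges $(v,w)$ and $(w,v_0)$ are distinct and hence independent in $G_t$, so the event ``$w$ responds to $v$ and is connected to $v_0$'' via edges of $G_t$ occurs with probability exactly $q^2$, and these events are independent across $w$. Thus the probability that no $w$ enables the cycle to be closed at iteration $t$ is at most $(1 - q^2)^m$.

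Plugging in $q \ge 1/(\gamma\sqrt n)$ yields
\[
(1-q^2)^m \;\le\; \exp\!\left(-q^2(n-4\log n)\right) \;\le\; \exp\!\left(-\tfrac{n-4\log n}{\gamma^2 n}\right),
\]
which is bounded above by some constant $e^{-\alpha}<1$ for $n$ large enough. Hence every iteration closes the cycle with probability at least a constant $c := 1 - e^{-\alpha} > 0$. Because the iterations are handled with independent copies $G_1,\ldots,G_{\log n}$, the probability that none of the $\log n$ attempts closes the cycle is at most $(1-c)^{\log n} = n^{-\Omega(1)}$.

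A separate bookkeeping point: an iteration might halt early because $v$ receives no response at all. In the fresh copy the probability of this is at most $(1-q)^{n-4\log n}\le \exp(-\sqrt{n}/(2\gamma))$, which is super-polynomially small; a union bound over the $\log n$ iterations keeps this contribution negligible. Combining the two bounds, w.h.p.\ the cycle closes within $\log n$ iterations of $3$ rounds each, yielding a cycle of length at most $3\log n + \log n = 4\log n$ within the allotted $3\log n$ rounds.

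The main obstacle, as in Lemma~\ref{lem:ph0}, is justifying that the ``fresh copy per iteration'' argument remains valid even though $v$ at iteration $t$ is a history-dependent node and $v_0$ is shared across iterations. The key observation is that the set of edges $\{(v,w),(w,v_0):w\notin P\}$ is disjoint from all edges examined in $G_1,\ldots,G_{t-1}$ and is freshly exposed in $G_t$, so by the coupling these edges form an independent family of Bernoulli$(q)$ variables, independent of everything already revealed. This is precisely what makes the iteration-by-iteration analysis go through.
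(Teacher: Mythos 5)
Your proof is correct and follows essentially the same route as the paper's: bound the probability of failing to close the cycle in a given iteration by $(1-q^2)^{n-O(\log n)}$ and multiply these failure probabilities over the $\log n$ independent attempts. You are in fact slightly more explicit than the paper in two places --- you expose a fresh copy of $G(n,q)$ in each iteration to justify independence across attempts (the paper refers to a single fresh copy for the whole phase, under which the edges to $v_0$ would be re-examined), and you separately account for the halt-on-no-response event --- but the underlying argument is the same.
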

\begin{proof} 
  By considering only the edges of the fresh copy of $G(n,q)$ we note
  that the probability that path $P$ cannot be closed into a cycle
  within $3\log n$ rounds is at most
\[\prod_{i=0}^{\log n -1} (1-q^2)^{n-3\log n -i}= (1-\frac{1}{n})^S\]
with \[S=\sum_{i=0}^{\log n - 1}n-3\log n -i=\log n\left(n-\log n -
(\log n -1)/2)\right).\]
By Lemma~\ref{limits} $(1-1/n)^S$ approaches 0 as $n$ goes to
infinity. This completes the proof. 
\end{proof}

\subsection{Middle Phases}
The middle phases contribute the bulk of nodes towards a Hamiltonian
cycle. In each phase the number of nodes is increased by a constant
factor \whp by concurrently testing all edges in $C$ for an extension.
In the following we prove a lower bound for the number of nodes that
are integrated \whp into $C$ in a middle phase. This will be done in
two steps. First we state a lower bound for the number of nodes $v\in
V\setminus C$ that send an invitation $I_2$. Based on this bound we
prove a lower bound for the number of nodes that received an
acceptance message $I_3$. Note that each node $v\in V\setminus C$ that
receives an acceptance message $I_3$ is integrated into $C$ and each
$v\in V\setminus C$ receives at most one $I_3$ message.

Let $c=|C|$ and $v\in V\setminus C$. The event that an edge $e$ of $C$
together with $v$ forms a triangle has probability $q^2$.
Unfortunately these events are not independent in case the edges have
a node in common. To have a lower bound for the probability that $v$
is connected to at least one pair of consecutive nodes on $C$ we
consider only every second edge on $C$. Denote the edges of $C$ by
$e_0,\ldots, e_{c-1}$ with $e_i = (v^i,u^i)$. Let $\pi_{v,i}$ be the
event that node $v$ forms a triangle with edge $e_{2i}$ such that the
edges $(v,v^i)$ and $(v,u^i)$ belong to newly uncovered copy of
$G(n,q)$. For fixed $v$ the events $\pi_{v,i}$ are independent and
each occurs with probability $q^2$. Let $\pi_{v}$ be the event that
for node $v\in V\setminus C$ at least one of the events
$\pi_{v,0}, \pi_{v,2},\pi_{v,4},\dots,\pi_{v,c}$ occurs. Clearly the
events $\pi_v$ are independent and each occurs with probability
$1-(1-q^2)^{c/2}$.

For $v\in V\setminus C$ let $X_v$ be a random variable that is $1$ if
event $\pi_v$ occurs. The variables $X_{v_1},\ldots, X_{v_{n-c}}$ are
independent Bernoulli-distributed random variables. Define a random
variable $X$ as
\[X = \sum_{v\in V\setminus C} X_v.\] 
Then we have
\begin{equation}
  \label{eq:X}
  E[X] =(n-c)(1-(1-q^2)^{c/2}).
\end{equation}
Obviously $X$ is a lower bound for the number of nodes of
$V\setminus C$ that are connected to at least one pair of consecutive
nodes on $C$, i.e., the number of nodes $v\in V\setminus C$ that sent
an invitation $I_2$.



Next let $Y$ be a random variable denoting the number of nodes of
$V\setminus C$ that receive an acceptance message $I_3$ provided that
$X=x$ nodes sent an invitation $I_2$. We compute the conditional
expected value $E[Y|X=x]$. The computation of $Y$ can be reduced to
the urns and balls model: The number of balls is $x$ and the number of
bins is $c$. Each ball is thrown randomly in any of the $c$ bins. Note
that the probability that a node $v$ in $C$ is connected to a node $w$
in $V\setminus C$ is independent of $v$ and $w$ at least $q$. Thus,
$Y$ is equal to the number of nonempty bins and hence
\begin{equation}
  \label{eq:Y}
  E[Y|X=x] = c\left(1-\left(1-\frac{1}{c}\right)^x\right).
\end{equation}
Note that for a given value of $x$ variable $Y$ is the number of nodes
inserted into $C$ in one phase. $Y/c$ is the ratio of the number of
newly inserted nodes to the number of nodes in $C$. The next
subsections give a lower bound for $Y/c$ that holds \whp We
distinguish the cases $x\ge n/7$ and $x< n/7$. The reason is that the
variance of $X$ behaves differently in these two ranges: For $x< n/7$
the variance is rather large, whereas for $x\ge n/7$ the variance
tends to $0$. In both cases we first compute a lower bound for $X$ and
then derive a lower bound for $Y/c$ with respect to the bound for $X$.

Instead of using $q=\sqrt{\log n}/\gamma\sqrt{n}$ the analysis of the
middle phases is done for the smaller value $q=1/\sqrt{n}$. This saves
us from using the constant $\gamma$ and simplifies the exposition of
the proofs.


\subsection{The case $c< n/7$}
Next we prove that while $c< n/7$ in each middle phase the number of
nodes in $C$ is increased by a factor of $2-e^{-1/3}$ and that after
$3 \log n$ phases the bound $n/7$ is exceeded.

\begin{lemma}\label{lem:lowerCaseXS}
  Let $3\log n < c < n/7$. Then there exists $d > 0$ such that
  $X > c/3$ with probability $1-1/n^d$.
\end{lemma}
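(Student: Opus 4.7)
The plan is in two steps: first, establish a deterministic lower bound of the form $E[X] \geq \alpha c$ with a constant $\alpha > 1/3$; second, apply a standard multiplicative Chernoff bound, using the independence of the Bernoulli summands built into the definition of $X$.

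For the expectation, substitute $q = 1/\sqrt{n}$ into (\ref{eq:X}) and invoke the Bonferroni-type inequality $1 - (1-1/n)^{c/2} \geq c/(2n) - c^2/(8n^2)$ to obtain
\begin{equation*}
E[X] \;\geq\; (n-c)\left(\frac{c}{2n} - \frac{c^2}{8n^2}\right) \;=\; \frac{c}{2}\left(1 - \frac{c}{n}\right)\left(1 - \frac{c}{4n}\right).
\end{equation*}
Under the hypothesis $c < n/7$ the two bracketed factors are at least $6/7$ and $27/28$, so $E[X] \geq (81/196)\,c$. Writing this as $E[X] \geq \alpha c$ with $\alpha = 81/196 > 1/3$, the number $1 - 1/(3\alpha)$ is a strictly positive absolute constant.

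Since $X$ is the sum of $n - c$ independent Bernoulli variables $X_v$, the multiplicative Chernoff bound $\Pr[X \leq (1-\delta) E[X]] \leq \exp(-\delta^2 E[X]/2)$ applies. Choosing $\delta = 1 - 1/(3\alpha)$ makes $(1-\delta) E[X] \geq c/3$, and therefore
\begin{equation*}
\Pr[X \leq c/3] \;\leq\; \exp\left(-\frac{\delta^2 \alpha\, c}{2}\right) \;=\; \exp(-\Omega(c)).
\end{equation*}
Plugging in the hypothesis $c > 3\log n$ converts this into the desired polynomial bound $\Pr[X \leq c/3] \leq n^{-d}$ for some constant $d > 0$.

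The only technical care needed -- and it is not really an obstacle -- is that the gap between $E[X]/c$ and $1/3$ remain bounded below by a positive constant \emph{uniformly} over the whole interval $3\log n < c < n/7$. The explicit factorization above makes this immediate: the right-hand side is monotonically decreasing in $c/n$ on this interval, and its value at the boundary $c = n/7$ already exceeds $c/3$ by a fixed factor. Consequently $\alpha$, $\delta$, and the resulting Chernoff exponent can all be chosen independent of $c$, so the lower bound on $X$ really is valid with probability $1 - n^{-d}$ throughout the stated range.
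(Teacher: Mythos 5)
Your proof is correct and follows the same two-step skeleton as the paper's---a deterministic lower bound on $E[X]$ followed by the multiplicative Chernoff bound of Lemma~\ref{chernoff}---but the way you obtain the expectation bound is genuinely different and arguably cleaner. The paper gets $E[X] > c/3$ from the appendix Lemma~\ref{lem:x2}, whose proof is a somewhat delicate calculus argument (comparing the derivatives of the two sides of a logarithmic inequality and locating where they cross); it then takes the $c$-dependent deviation $\delta = 1 - c/(3E[X])$, so that $(1-\delta)E[X]$ equals $c/3$ exactly, and must separately show that $\delta^2$ stays above $0.037$ uniformly over $3\log n < c < n/7$ via a limit computation at $c \to n/7$. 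Your truncated binomial expansion $1-(1-1/n)^{c/2} \ge c/(2n) - c^2/(8n^2)$ replaces all of that with two lines of algebra, yields the explicit and slightly stronger constant $E[X] \ge (81/196)\,c$ (which in particular re-proves Lemma~\ref{lem:x2}), and lets you work with a fixed $\delta = 47/243$ independent of $c$; the resulting polynomial exponents are of the same order ($d \approx 0.02$) in both arguments. The only point deserving a sentence in a polished write-up is that the Bonferroni-type inequality with the possibly non-integer exponent $c/2$ needs $c/2 \ge 2$, which is guaranteed by the hypothesis $c > 3\log n$.
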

\begin{proof}
  From equation~(\ref{eq:X}) and Lemma~\ref{lem:x2} (see Appendix) it follows that 
  \[E[X] =(n-c)(1-(1-q^2)^{c/2})> c/3 > \log n.\] Thus, $c/(3E[X]) < 1$ for
  $3\log n < c < n/7$. Also, $c/(3E[X])$ is strictly monotonically
  increasing in this range for fixed $n$. Furthermore, for fixed $n$
  we have
  \[\lim_{c \rightarrow n/7} \frac{c}{3E[X]}= \frac{1}{18(1-e^{-1/14})}< 0.81.\] Thus, for $c$ in the specified range \[\lim_{n \rightarrow \infty} \left(1 - c/(3E[X])\right)^2 > 0.037.\]
  Let $\delta = 1 - c/(3E[X])$. Then $0< \delta < 1$ and we have
  \[E[X]\delta^2 = E[X]\left(1 - c/(3E[X])\right)^2\ge 
    0.037 \log n\]
  for $3\log n < c < n/5$. Hence,
  $e^{-E[X]\delta^2/2}\le 1/n^{0.037/2}$. The Chernoff
  bound (Lemma~\ref{chernoff}) yields that
\[X > (1-\delta) E[X] = \left(1 - 1 + \frac{c}{3E[X]}\right)E[X] =
  c/3\]
 with probability at least $1-1/n^{0.037/2}$.
\end{proof}

\begin{lemma}\label{lem:lowerCaseYS}
  Let $\beta=0.92$ and $3\log n < c < n/7$.
  Then there exist $d > 0$ such that
  $\frac{Y}{c} \ge \beta\left(1-\frac{1}{e^{1/3}}\right)$ with
  probability $1-1/n^d$.
\end{lemma}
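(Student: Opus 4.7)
The plan is to combine the high-probability bound $X > c/3$ from Lemma~\ref{lem:lowerCaseXS} with a concentration argument for the urns-and-balls random variable $Y$ around its conditional expectation. First, I would condition on the event $\{X > c/3\}$, which occurs with probability at least $1 - 1/n^d$ for some $d>0$. Throughout the remainder of the argument, everything is computed under this conditioning.

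Second, I would use the closed form $E[Y \mid X=x] = c(1-(1-1/c)^x)$ from equation~(\ref{eq:Y}) together with monotonicity in $x$ to obtain
\[
E[Y \mid X=x] \;\ge\; c\left(1 - (1-1/c)^{c/3}\right)
\qquad \text{whenever } x > c/3.
\]
Since $c \ge 3\log n$ grows with $n$, the estimate $(1-1/c)^{c/3} \to e^{-1/3}$ holds, and a standard quantitative bound (from $\log(1-1/c) \le -1/c$ and $(1-1/c)^{c/3}\le e^{-1/3}$, or its reverse direction) yields $E[Y\mid X=x]/c \ge 1 - e^{-1/3} - o(1)$ uniformly in $c$ in the specified range.

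Third, I would establish concentration of $Y$ around its conditional mean. Given $X=x$, the variable $Y$ is the number of non-empty bins when $x$ balls are thrown independently and uniformly into $c$ bins. Writing $Y = \sum_{i=1}^{c} Z_i$ with $Z_i$ the indicator that bin $i$ is non-empty, the indicators $Z_i$ are negatively associated, so a Chernoff-type lower-tail bound applies exactly as in Lemma~\ref{chernoff}. Choosing $\delta$ so that $(1-\delta)(1-e^{-1/3})$ strictly exceeds $\beta(1-e^{-1/3}) = 0.92(1-e^{-1/3})$ with room to absorb the $o(1)$ correction above, and noting that $E[Y\mid X=x] = \Theta(c) = \Omega(\log n)$, the bound $\exp(-\delta^2 E[Y\mid X=x]/2)$ is at most $1/n^{d'}$ for some $d'>0$.

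Fourth, I would combine the two high-probability events by a union bound: the event $\{X>c/3\}$ and the conditional event $\{Y \ge (1-\delta) E[Y\mid X]\}$ both hold except with probability $O(1/n^{\min(d,d')})$, and on their intersection we have $Y/c \ge \beta(1-e^{-1/3})$ as claimed. The main obstacle is verifying that the constants line up: one must make sure the slack built into $\beta=0.92$ is large enough to swallow both the finite-$c$ error in the approximation $(1-1/c)^{c/3}\approx e^{-1/3}$ and the multiplicative loss $1-\delta$ from the Chernoff step, while simultaneously keeping $\delta^2 E[Y\mid X=x]$ on the order of $\log n$ so that the concentration probability remains polynomially small.
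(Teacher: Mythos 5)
Your proposal is correct and follows essentially the same route as the paper: condition on the event $X>c/3$ from Lemma~\ref{lem:lowerCaseXS}, lower-bound $E[Y\mid X=x]$ via $(1-1/c)^{c/3}\le e^{-1/3}$, apply a Chernoff lower tail to the occupancy count $Y$, and finish with a union bound. Your explicit remark that the non-empty-bin indicators are only negatively associated (so that the Chernoff bound still applies) is a point the paper's proof glosses over, but it does not change the argument.
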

\begin{proof}
  From  equation~(\ref{eq:Y})  %
  it follows
\[E[Y|X\ge c/3] \ge c\left(1-(1-\frac{1}{c})^{c/3}\right).\] Let
$\delta^2 = 3\alpha\log n/c$ with $\alpha=(1 -\beta)^2$. Then
$\delta^2 <1$ and
  \[e^{-E[Y|X\ge c/2]\delta^2/3}\le e^{-3\alpha \log n (1-(1-1/c)^{c/3})/2} = \left(\frac{1}{n}\right)^{3\alpha(1-(1-1/c)^{c/3})/2}.\] 
The Chernoff bound (Lemma~\ref{chernoff}) implies that 
\[Y|(X\ge c/3) > (1-\delta) E[Y|X\ge c/3] \ge \left(1 -
  \sqrt{\frac{3\alpha\log n}{c}}\right)c(1-(1-\frac{1}{c})^{c/3})\]
with probability $1 - 1/n^{3\alpha(1-(1-1/c)^{c/3})/2}$. Hence, by
Lemma~\ref{lem:lowerCaseXS} there exists $d > 0$ such that 
\[Y \ge \left(1 -
  \sqrt{\frac{3\alpha\log n}{c}}\right)c(1-(1-\frac{1}{c})^{c/3})\]
with probability $1-1/n^d$.
This gives for any $c\ge 3 \log n$
\[\frac{Y}{c} =\left(1 - \sqrt{\frac{3\alpha\log n}{c}}\right)(1-(1-\frac{1}{c})^{c/3})\ge \beta(1 - \frac{1}{e^{1/3}}).\]
\end{proof}

\begin{lemma}\label{lem:complLS}
  Let $C$ be a cycle with at least $3\log n$ nodes. Then after at most
  $3 \log n$ phases $C$ has \whp at least $n/7$ nodes.
\end{lemma}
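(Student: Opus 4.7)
The plan is a straightforward iteration of Lemma~\ref{lem:lowerCaseYS}. Let $c_i$ denote the size of $C$ after $i$ middle phases, with $c_0 \geq 3\log n$, and set $\mu := 1 + \beta(1 - e^{-1/3})$ where $\beta = 0.92$. Then Lemma~\ref{lem:lowerCaseYS} gives $c_{i+1} \geq \mu c_i$ with failure probability at most $1/n^d$ for some fixed $d > 0$, provided $3\log n \leq c_i < n/7$.

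I would first take a union bound over the at most $3\log n$ middle phases, obtaining with probability $1 - O((\log n)/n^d) = 1 - o(1)$ that $c_k \geq 3(\log n)\mu^k$ for every $k \leq 3\log n$ in which the cycle has not yet crossed the $n/7$ threshold. It then remains to verify that $k = 3\log n$ phases suffice to push the cycle past $n/7$, i.e., that $\mu^{3\log n} \geq n/(21\log n)$. Taking logarithms (base two) and dividing by $\log n$, this reduces to the inequality $3\log_2 \mu \geq 1 - \log_2(21\log n)/\log n$. Direct numerical evaluation gives $\mu = 1 + 0.92(1 - e^{-1/3}) \approx 1.2608$, which strictly exceeds $2^{1/3} \approx 1.2599$, so $3\log_2 \mu > 1$ and the inequality holds for all sufficiently large $n$.

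The main obstacle is precisely this tight arithmetic check: the value $\beta = 0.92$ in Lemma~\ref{lem:lowerCaseYS} appears to have been calibrated so that $\mu$ just exceeds $2^{1/3}$, leaving enough slack to absorb the $\log_2(21\log n)/\log n$ correction and yield exactly $3\log n$ middle phases. If $c_i$ crosses $n/7$ before phase $3\log n$ the claim is immediate; otherwise the above growth calculation, combined with the union bound, shows that $c_{3\log n} \geq n/7$ \whp, finishing the proof.
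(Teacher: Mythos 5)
Your proposal is correct and follows essentially the same route as the paper: iterate Lemma~\ref{lem:lowerCaseYS}, take a union bound over the at most $3\log n$ phases, and observe that the per-phase growth factor $\mu = 1+0.92(1-e^{-1/3}) \approx 1.2608$ exceeds $2^{1/3}$, so the cycle at least doubles every three phases and reaches $n/7$ within $3\log n$ phases. The paper phrases the arithmetic as ``$\mu^3 > 2$, hence doubling every three phases'' and solves $2^{i_0/3}\,3\log n \ge n/7$ for $i_0$, but this is the same tight calibration of $\beta=0.92$ that you identified.
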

\begin{proof}
  Lemma~\ref{lem:lowerCaseYS} yields that while the circle has less
  than $n/7$ nodes \whp in $i$ phases the number of nodes in $C$ grows
  from $c$ to $(1 +\beta(1-\frac{1}{e^{1/3}}))^ic$, i.e., in three
  phases to $(1 +0.92(1-\frac{1}{e^{1/3}}))^3c > 2 c$, i.e., it
  doubles at least every three phases. Hence, starting with
  $c=3\log n$, after at $i$ phases $C$ has at least $2^{i/3}3\log n$
  nodes. Note that $2^{i_0/3}3\log n \ge n/7$ for
  $i_0=3\log \left(n/(21\log n)\right)/\log 2$. Since
  $3\log n \ge i_0$, the union bound implies that after at most
  $3\log n$ phases \whp the circle has at least $n/7$ nodes.
\end{proof}

\subsection{The case $c\ge n/7$}
Next we show that the size of $C$ is still growing by a constant
factor, but the factor is decreasing in each phase. This allows to
infer that after $13\log n$ phases \whp $C$ has at least $n -3\log n$
nodes. Let $c = \xi n$ and \[\Xi=\left(1 - \sqrt{\frac{3\log
        n}{n(1-\xi)}}\right)c(1/\xi-1)(1-(1-q^2)^{c/2}).\] 

\begin{lemma}\label{lem:upperCaseXS}
  Let $c = \xi n$ with $1/7 \le \xi < 1 - 3(\log n)/n$. Then there
  exists $d > 0$ such that $X> \Xi$ with probability $1-1/n^d$.
\end{lemma}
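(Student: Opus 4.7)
The plan is to recognize $\Xi$ as a standard Chernoff-type deviation below $E[X]$ and to apply the Chernoff bound to the sum of independent Bernoullis $X = \sum_{v \in V\setminus C} X_v$ introduced earlier. Indeed, using $n - c = n(1-\xi) = c(1/\xi - 1)$, equation (\ref{eq:X}) rewrites as $E[X] = c(1/\xi - 1)(1-(1-q^2)^{c/2})$, so that $\Xi = (1-\delta)E[X]$ with $\delta = \sqrt{3\log n / (n(1-\xi))}$. The hypothesis $\xi < 1 - 3(\log n)/n$ guarantees $\delta < 1$, which is precisely what is needed to invoke the lower-tail Chernoff bound in Lemma~\ref{chernoff}.

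Next I will compute $E[X]\delta^2$ and simplify. Substituting the two expressions gives
\[
E[X]\delta^2 = c(1/\xi - 1)(1-(1-q^2)^{c/2}) \cdot \frac{3\log n}{n(1-\xi)} = 3(\log n)\bigl(1-(1-q^2)^{c/2}\bigr),
\]
where the factor $c(1/\xi - 1)/(n(1-\xi)) = 1$ by the identity above. The remaining task is to lower bound $1 - (1-q^2)^{c/2}$ by a positive constant. Using $q = 1/\sqrt{n}$ and the hypothesis $c \ge n/7$, we have
\[
(1-q^2)^{c/2} = (1 - 1/n)^{c/2} \le e^{-c/(2n)} \le e^{-1/14},
\]
so that $1 - (1-q^2)^{c/2} \ge 1 - e^{-1/14} > 0$, a positive constant independent of $n$.

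Combining these, Chernoff yields
\[
\Pr[X \le \Xi] = \Pr[X \le (1-\delta)E[X]] \le e^{-E[X]\delta^2/2} \le n^{-d}
\]
with $d = \tfrac{3}{2}(1 - e^{-1/14}) > 0$, which is the desired conclusion. The only subtlety I anticipate is bookkeeping around the independence of the $X_v$ and the restriction to every second edge of $C$: these were already arranged upstream (the events $\pi_{v,i}$ refer to disjoint vertex pairs and to a \emph{freshly uncovered} copy of $G(n,q)$ via the coupling), so the Bernoulli sum to which we apply Chernoff is genuinely a sum of independent indicators and the argument goes through. The rest is purely algebraic and parallels the structure of Lemma~\ref{lem:lowerCaseXS}, with the stronger bound on $c$ replacing the need to argue that the prefactor $(1-c/(3E[X]))^2$ stays bounded away from $0$.
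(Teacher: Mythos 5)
Your proof is correct and follows essentially the same route as the paper: identify $\Xi=(1-\delta)E[X]$ with $\delta^2=3\log n/(n-c)$, apply the lower-tail Chernoff bound to the independent indicators $X_v$, and bound $(1-q^2)^{c/2}\le e^{-\xi/2}\le e^{-1/14}$ to obtain a uniform positive exponent $d$. No gaps; the extra bookkeeping on independence and on $\delta<1$ matches what the paper relies on implicitly.
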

\begin{proof}
  Let $\delta^2 = 3\log n/(n-c)$. Then $0 < \delta< 1$. From equation
  (\ref{eq:X}) and the Chernoff bound (Lemma~\ref{chernoff}) it
  follows
  \[Prob[X \le (1-\delta) E[X]]\le e^{- E[X]\delta^2/2} = e^{-3\log n
      (1-(1-q^2)^{c/2})/2} = 1/n^{3(1-(1-q^2)^{c/2})/2}.\] Note that
  $c \ge n/7$ implies $cq^2/2 = \xi/2 \ge 1/14$ and hence
  $(1-q^2)^{c/2} \le e^{-\xi/2}$ by Lemma~\ref{limits}. Thus,
  $Prob[X \le (1-\delta) E[X]]\le 1/n^{3(1-e^{-\xi/2})/2}$. Since
  $(1-\delta) E[X]=\Xi$ this yields the result.
\end{proof}

Note that this Lemma proves that \whp in each phase there exists at
least one node that can be used to extend the cycle as long as
$c < n -3\log n$ holds.

\begin{lemma}\label{lem:upperCaseS}
  Let $c = \xi n$ with $1/7 \le \xi < 1 - 3(\log n)/n$. Then there
  exists $d > 0$ such that
  $\frac{Y}{c} \ge \left(1 - \sqrt{\frac{3\log
        n}{n(1-\xi)}}\right)\left(1-e^{(1 -
      1/\xi)(1-e^{-\xi/2})}\right)$ with probability $1-1/n^d$.
\end{lemma}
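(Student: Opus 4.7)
The plan is to follow the template of Lemma~\ref{lem:lowerCaseYS}, but to use Lemma~\ref{lem:upperCaseXS} as the input lower bound on $X$ and to untangle the extra $(1-\delta)$-factor that now sits inside the definition of $\Xi$.

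First I would condition on the event $X\ge\Xi$, which by Lemma~\ref{lem:upperCaseXS} holds with probability $1-1/n^{d'}$ for some $d'>0$. Since $1-(1-1/c)^x$ is increasing in $x$, equation~(\ref{eq:Y}) combined with the standard inequality $(1-1/c)^\Xi\le e^{-\Xi/c}$ gives $E[Y\,|\,X\ge\Xi]\ge c(1-e^{-\Xi/c})$. Next I would apply a Chernoff-type concentration bound to $Y$ given $X\ge\Xi$ with $\delta^2=3\log n/(n(1-\xi))$ -- the same deviation that already appears inside $\Xi$ -- yielding $Y\ge(1-\delta)E[Y\,|\,X\ge\Xi]$ with probability $1-1/n^{d''}$ for some $d''>0$. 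The tail can be controlled because $E[Y\,|\,X\ge\Xi]=\Omega((1-\xi)n)$ throughout the specified range of $\xi$, which makes $\delta^2 E[Y\,|\,X\ge\Xi]/2$ of order $\log n$.

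The algebraic heart of the argument is to massage the exponent so that the factor $1-e^{(1-1/\xi)(1-e^{-\xi/2})}$ required by the lemma emerges. I would write $\Xi/c=(1-\delta)(1/\xi-1)(1-(1-q^2)^{c/2})$; since $cq^2=\xi\ge 1/7$, Lemma~\ref{limits} yields $(1-q^2)^{c/2}\le e^{-\xi/2}$, so that $-\Xi/c\le(1-\delta)A$ with $A:=(1-1/\xi)(1-e^{-\xi/2})<0$. The function $f(t)=1-e^{tA}$ is concave on $[0,1]$ when $A<0$ (its second derivative equals $-A^2e^{tA}$) and satisfies $f(0)=0$, $f(1)=1-e^A$, so $f(t)\ge t(1-e^A)$; taking $t=1-\delta$ converts $1-e^{(1-\delta)A}$ into a factor of the desired shape. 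Chaining the three inequalities produces a lower bound of the form $(1-\delta)^2(1-e^A)$ on $Y/c$, and the residual second $(1-\delta)$-factor can be absorbed by choosing the deviation in the $Y$-Chernoff slightly smaller than that in the $X$-Chernoff, so that only a single factor $1-\sqrt{3\log n/(n(1-\xi))}$ survives in front of $1-e^A$.

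The main obstacle is the Chernoff step for $Y$, because $Y$ is not a sum of independent Bernoullis but the number of non-empty bins in a uniform ball-to-bin allocation. I would resolve this either by exploiting the negative association of the bin-indicators (so that the standard multiplicative Chernoff bound still applies to their sum $Y$), or via McDiarmid's inequality on the positions of the $X$ balls, each of which changes $Y$ by at most $1$. Either route delivers the tail bound required to combine with Lemma~\ref{lem:upperCaseXS} and to conclude the final ``with probability $1-1/n^d$'' statement.
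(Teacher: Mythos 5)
Your proposal follows the same skeleton as the paper's proof: condition on $X\ge\Xi$ via Lemma~\ref{lem:upperCaseXS}, lower-bound $E[Y\,|\,X\ge\Xi]$ through equation~(\ref{eq:Y}), and finish with a Chernoff-type concentration step. Where you differ is in two refinements, both to your credit. First, the paper applies Lemma~\ref{chernoff} to $Y$ as if it were a sum of independent Bernoullis, whereas $Y$ counts non-empty bins and its indicators are only negatively associated; you flag this and offer two valid repairs (Chernoff under negative association, or McDiarmid with bounded differences $1$), which closes a gap the paper leaves silent. Second, the paper extracts the factor $1-e^{(1-1/\xi)(1-e^{-\xi/2})}$ by a limit argument as $n\to\infty$, which for finite $n$ does not actually dominate $1-e^{(1-\delta)A}$; your concavity argument $1-e^{tA}\ge t\,(1-e^{A})$ is a clean non-asymptotic substitute. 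The one soft spot is your final claim that the second $(1-\delta)$ factor can be ``absorbed'': strictly, chaining your inequalities leaves a product of two such factors, which is slightly smaller than the single factor $1-\sqrt{3\log n/(n(1-\xi))}$ stated in the lemma. This is a cosmetic mismatch rather than a substantive one --- the paper's own proof has the same looseness (it even uses $\delta=\sqrt{3\log n/c}$ in the $Y$-step), and the downstream application in Lemma~\ref{lem:complUS} only needs the prefactor to be $1-o(1)$ --- but if you want the literal constant of the statement you should either restate the lemma with the product of the two deviations or enlarge the constant $3$ in one of them.
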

\begin{proof} 
  From equation~(\ref{eq:Y}) it follows
\[E[Y|X\ge \Xi] \ge c\left(1-(1-1/c)^{\Xi}\right).\]
Since $\lim_{n
    \rightarrow \infty} (1-q^2)^{c/2}=e^{-\xi/2}$ by Lemma~\ref{limits} it
  follows \[ \lim_{n
    \rightarrow \infty} \frac{\Xi}{c} \ge
  \lim_{n
    \rightarrow \infty} \left(1 - \sqrt{\frac{3\log
        n}{n(1-\xi)}}\right)\left(\frac{1}{\xi}-1\right)(1-e^{-\xi/2}) = \left(\frac{1}{\xi}-1\right)(1-e^{-\xi/2})\] and therefore by Lemma~\ref{limits}
  \[E[Y|X\ge \Xi]/c=(1-(1-1/c)^{\Xi})\ge 1 - e^{(1 - 1/\xi)(1-e^{-\xi/2})}> 0.\] Next
  let $\delta^2 = 3\log n/c$. Then
  \[e^{-E[Y|X\ge \Xi]\delta^2/2}\le e^{-3\log n (1 - e^{(1 - 1/\xi)(1-e^{-\xi/2})})/2} =
  \left(\frac{1}{n}\right)^{3(1 - e^{(1 - 1/\xi)(1-e^{-\xi/2})})/2}.\] Hence, the
  Chernoff bound implies that with probability
  $1-1/n^{3(1 - e^{(1 - 1/\xi)(1-e^{-\xi/2})})/2}$
  \[Y(X\ge\Xi) > (1-\delta) E[Y|X\ge\Xi] \ge \left(1 -
      \sqrt{\frac{3\log n}{c}}\right)c(1-(1-\frac{1}{c})^{\Xi}).\] The
  result follows from Lemma~\ref{lem:upperCaseXS}.
\end{proof}

\begin{lemma}\label{lem:complUS}
  Let $p\ge \log n/\sqrt{n}$ and $C$ be a cycle with at least $n/7$ nodes.
  Then after $13\log n$ phases $C$ has \whp at least $n -3\log n$
  nodes.
\end{lemma}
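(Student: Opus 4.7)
The plan is to iterate Lemma~\ref{lem:upperCaseS} across the $13\log n$ middle phases and exhibit geometric shrinkage of the gap $g_i := n - c_i$. First I would take a union bound: each application of Lemma~\ref{lem:upperCaseS} fails with probability $\le n^{-d}$, so over $13\log n$ phases the total failure probability is $O(\log n \cdot n^{-d}) = o(1)$; from now on I condition on the favorable event. In each phase with $\xi_i := c_i/n \in [1/7, 1 - 3\log n/n]$ the lemma yields the one-step recurrence
\[
g_{i+1} \le g_i\Bigl(1 - \tfrac{\xi_i}{1-\xi_i}\,\phi(\xi_i)\Bigr), \qquad \phi(\xi) := \bigl(1 - \sqrt{3\log n / g_i}\bigr)\bigl(1 - e^{(1-1/\xi)(1-e^{-\xi/2})}\bigr).
\]

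The analytic heart of the proof is a uniform lower bound on $\tfrac{\xi}{1-\xi}\bigl(1 - e^{(1-1/\xi)(1-e^{-\xi/2})}\bigr)$ over $\xi \in [1/7, 1)$. I would apply the refinement $1 - e^{-y} \ge y(1 - y/2)$ with $y(\xi) := (1/\xi - 1)(1 - e^{-\xi/2})$, noting that $y(\xi) < 1/2$ on $(0,1)$ since $1 - e^{-\xi/2} < \xi/2$. This gives the closed form $(1 - e^{-\xi/2})(1 - y(\xi)/2)$. A short check of signs of $y'(\xi)$ on $[1/7,1)$ shows $y$ is decreasing there, so $(1 - y/2)$ is increasing; combined with the monotonicity of $1 - e^{-\xi/2}$, the product is monotone increasing on $[1/7, 1)$ with infimum at $\xi = 1/7$, where it numerically exceeds $0.054 > \ln 2 /13$. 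Multiplying by the $(1 - \sqrt{3\log n/g_i})$ factor, which is $1 - o(1)$ as long as $g_i = \omega(\log n)$, yields a per-phase shrinkage factor of at most $2^{-c}$ for an absolute constant $c > 1/13$ throughout the bulk of the process.

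Iterating the recurrence from $g_0 \le 6n/7$ over $13\log n$ phases then drives $g_i$ below $n^{1 - 13c} = n^{-\Omega(1)}$, far below the target $3\log n$. A technicality is that the $\sqrt{3\log n/g_i}$ correction degrades once $g_i$ approaches $\Theta(\log n)$; I would handle this by splitting the argument at a threshold such as $g_i = (\log n)^2$, which is reached in at most $12\log n$ phases by the main argument (since the correction is then $o(1)$), after which a constant number of additional phases with a slower but still constant shrinkage rate finishes the reduction to $3\log n$. The main obstacle is the tight margin in constants: at $\xi = 1/7$ the worst-case shrinkage rate exceeds $\ln 2/13$ only by a few percent, so the quadratic refinement $1 - e^{-y} \ge y(1 - y/2)$ is essential and the cruder $1 - e^{-y} \ge y/2$ is not enough; the $\sqrt{\,\cdot\,}$ correction must also be tracked carefully in the endgame to preserve this margin.
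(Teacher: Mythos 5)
Your overall strategy is the same as the paper's: apply Lemma~\ref{lem:upperCaseS} once per phase, union-bound the $13\log n$ per-phase failure probabilities, and show that the gap $g_i=n-c_i$ decays geometrically. Your recurrence $g_{i+1}\le g_i\bigl(1-\tfrac{\xi_i}{1-\xi_i}\phi(\xi_i)\bigr)$ is the correct reformulation of the paper's growth statement, and your monotonicity analysis of $(1-e^{-\xi/2})(1-y(\xi)/2)$ is sound.

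The weak point is the uniform worst-case rate. You bound the per-phase shrinkage by its infimum over all of $[1/7,1)$, attained at $\xi=1/7$, where your closed form gives $\approx 0.0547$ (the exact value of $\tfrac{\xi}{1-\xi}(1-e^{-y(\xi)})$ there is $\approx 0.0564$). For $13\log n$ phases to take $6n/7$ below $3\log n$ you need $(1-r)^{13\log n}\lesssim \log n/n$, i.e.\ $r\ge 1-2^{-1/13}\approx 0.0519$ if $\log$ is base $2$, but $r\ge 1-e^{-1/13}\approx 0.0741$ if $\log$ is natural. Your bound clears the first threshold by a hair and fails the second; the paper's choice of $13$ as the least integer $\alpha$ with $1+\alpha\log(0.92085)<0$ points to the natural-log reading, under which $13\log n$ phases at rate $0.0547$ leave a gap of polynomial size (roughly $n^{0.27}$), nowhere near $3\log n$. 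The paper avoids this tightness by \emph{not} using the worst case at $\xi=1/7$: it spends two phases pushing $\xi$ from $1/7$ past $1/5$ (the explicit values $0.191n$ and $0.252n$) and then lower-bounds the iteration map linearly, $f(\xi)\ge a\xi+(1-a)$ with $a\approx 0.92085$, which is exactly the shrinkage rate $1-a\approx 0.0791$ at $\xi=1/5$; this exceeds $0.0741$ and works for either base. You would need the same bootstrap past $\xi=1/5$ (or a phase count closer to $18\log n$) to make your uniform bound close. A second, smaller issue: the endgame from $g_i=(\log n)^2$ down to $3\log n$ is a multiplicative factor of $\log n/3$ and hence $\Theta(\log\log n)$ phases even at full rate, and the factor $1-\sqrt{3\log n/g_i}$ degrades to $0$ as $g_i\to 3\log n$, so ``a constant number of additional phases'' is not right; it still fits in the remaining budget, and to be fair the paper glosses over this correction factor entirely, so your treatment is the more honest one, just with the wrong order of magnitude.
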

\begin{proof}
  If $c= \xi n\ge n/7$ then by Lemma~\ref{lem:upperCaseS} \whp in one
  phase the number of nodes in $C$ grows from $\xi n$ to
  $(1 + \rho(1-e^{(1- 1/\xi)(1-e^{-\xi/2})}))\xi n$, where
  $\rho$ can be arbitrary close to 1. Thus, \whp the number of nodes
  strictly increase per round, but the increase decreases. For example
  the size of $C$ grows in three rounds from $n/7$ to $0.191
  n$ to $0.252 n$. Let $f(c)
  =(2-e^{(1-1/c)(1-1/e^{c/2})})c$. Note that $f(c)\ge a x +
  b$ for $c\in (1/5,1)$, $a=(1-f(1/5))/0.8\approx 0.92085$, and $b=1-a
  \approx 0.07914$. Let $c_0=1/5$ and $c_{i}= (2-e^{(1- 1/c_{i-1})(1-
    1/e^{c_{i-1}/2)}})c_{i-1}$. Thus,
  \[c_i \ge a c_{i-1} + b \ge a^2 c_{i-2} + ab + b\ge \ldots \ge a^i
  (c_0-1) + 1. \] Lemma~\ref{lem:upperCaseS} yields that after another
  $i$ rounds $C$ contains at least $c_i\,n\ge \left(a^i (c_0-1) +
    1\right)n$ nodes. Let $\alpha$ such $1+\alpha\log a < 0$, e.g.,
  $\alpha =13$. Hence, for larger values of $n$ we have $c_{\alpha\log
    n} \ge 1-\frac{4}{5}a^{\alpha \log n} = 1 - \frac{4}{5} n^{\alpha
    \log a} \ge 1 - \frac{3\log n}{n}$. Thus, $c_{13\log n}\,n \ge n -
  3\log n.$ The lemma follows from the union bound.
\end{proof}


\subsection{Final Phases}
After the middle phases \whp there are at most $3\log n$ nodes outside
$C$. The following lemma proves the correctness of the final phases. 



\begin{lemma}\label{lem:extInt}
  If $p \ge (\log n)^{3/2}/\sqrt{n}$ the final $3\log n$ phases
  integrate \whp all remaining nodes into $C$.
\end{lemma}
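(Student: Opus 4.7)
The plan is to show that in each final phase at least one node of $V\setminus C$ can be integrated. By Lemma~\ref{lem:complUS} at most $3\log n$ nodes lie outside $C$, and since $v_0$ serializes one integration per phase, success in every phase completes the Hamiltonian cycle. I will in fact prove the stronger statement that any fixed $v \in V\setminus C$ is integrable in a given phase with probability $1-n^{-\Omega(1)}$, then take a union bound over the $\le 3\log n$ candidate nodes and the $3\log n$ phases. By the coupling argument used in the earlier sections, each final phase uncovers one fresh copy of $G(n,q)$ with $q = \sqrt{\log n}/(\gamma\sqrt{n})$, independent of all edges used to build the current $C$; the $\gamma=21$ budget easily accommodates the $3\log n$ new copies.

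Fix $v\in V\setminus C$ and let $c = |C| \ge n - 3\log n$. First expose the neighbors of $v$ on $C$ in the fresh copy; their number $K$ is $\mathrm{Bin}(c,q)$ with mean $cq = \Omega(\sqrt{n\log n})$, so Chernoff gives $K \ge cq/2$ with probability $1 - n^{-\omega(1)}$. Let these neighbors in clockwise order be $a_1,\ldots, a_K$ and form disjoint consecutive pairs $P_i = (a_{2i-1}, a_{2i})$. Setting $b_i := \mathrm{succ}_C(a_{2i-1})$ and $b_i' := \mathrm{succ}_C(a_{2i})$, the integration of Fig.~\ref{fig:fph} with $w_1,w_4 := a_{2i-1}, a_{2i}$ and $w_2,w_3 := b_i, b_i'$ is realized along $P_i$ as soon as the edge $\{b_i, b_i'\}$ exists in $G$. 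These middle-edge tests can be performed in the same fresh copy, because the pair $\{b_i, b_i'\}\subseteq V\setminus\{v\}$ is disjoint as an edge from every anchor edge $\{v,a_j\}$ and hence independent of the earlier anchor revelation.

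The main technical obstacle is establishing \emph{independence} of the middle-edge tests across the indices $i$. For the presence indicators of $\{b_i, b_i'\}$ in the fresh copy to be jointly independent $\mathrm{Bernoulli}(q)$ variables one needs the pairs to be vertex-disjoint and to avoid collisions with the anchors; the typical anchor spacing on $C$ is $c/K = \Theta(1/q) = \Theta(\sqrt{n/\log n})$, so such collisions are rare. A short counting argument on the neighbour set of $v$ bounds the number of degenerate pairs (where two consecutive anchors are within $C$-distance $\le 2$) by $O(cq^2) = O(\log n)$, leaving at least $K/3$ clean pairs. Conditioning on the anchor positions, discarding the degenerate pairs, and only then inspecting the middle edges yields genuine independence. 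The probability that none of the clean pairs yields an integration is therefore at most
\[
(1-q)^{K/3}\;\le\;\exp\!\left(-\frac{cq^2}{6}\right)\;=\;\exp(-\Omega(\log n))\;=\;n^{-\Omega(1)},
\]
where we used $cq^2 = (n-3\log n)\cdot\log n/(\gamma^2 n) = \Omega(\log n)$. Combining this with the Chernoff step above and union-bounding over the $\le 3\log n$ candidate nodes and the $3\log n$ final phases gives total failure probability $O(\log^2 n)\cdot n^{-\Omega(1)} = o(1)$, proving the lemma.
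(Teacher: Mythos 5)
Your proof is correct and follows the same skeleton as the paper's: fix a node $v\notin C$, reveal a fresh copy of $G(n,q)$ via the coupling, apply a Chernoff bound to the number of anchors (neighbors of $v$ on $C$), bound the probability that no pair of anchors admits the connecting edge between their clockwise successors, and finish with a union bound over the remaining nodes and the $3\log n$ phases. The one place you genuinely diverge is the treatment of the middle-edge events: the paper tacitly works with all $\sim X^2/2$ anchor pairs and writes the failure probability as $(1-q)$ raised to roughly $X(X+1)\sqrt{\log n}/(2\sqrt{n})$, leaving the dependence between overlapping middle edges unaddressed, whereas you restrict to $\Theta(K)$ vertex-disjoint consecutive pairs so that the tested edges are distinct, unexposed, and hence genuinely independent $\mathrm{Bernoulli}(q)$ trials. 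This costs you in the exponent --- you get $n^{-\Omega(1)}$ with a small constant of order $1/(6\gamma^2)$ rather than the paper's much stronger bound --- but that is still ample for the $O(\log^2 n)$-term union bound, and your handling of the independence issue is the more careful of the two. Two minor remarks: edge-distinctness already gives independence in $G(n,q)$, so vertex-disjointness is more than you need; and the bound on degenerate pairs needs only Markov's inequality, since the threshold $K/6=\Omega(\sqrt{n\log n})$ dwarfs the $O(\log n)$ expectation --- indeed a distance-$1$ degenerate pair triggers the triangle insertion and integrates $v$ outright, so those pairs are harmless anyway.
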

\begin{proof} 
  Let $v\in V\setminus C$ be a fixed node. As before, we only consider
  edges incident to $v$ that belong to a fresh copy of $G(n,q)$. Let
  the random variable $X$ denote the number of neighbors of $v$ on
  $C$. If $C$ consists of $c$ nodes then
  $E[X]\ge \frac{c}{\gamma}\sqrt{\log n/n}$. Let
  $\delta^2 = \frac{2 \sqrt{n}}{c}$. Then $\delta^2 < 1$ and
  $\lim_{n \rightarrow \infty} E[X]\delta^2/2 \ge \lim_{n \rightarrow
    \infty}\sqrt{\log n}/\gamma =\infty$. Now the Chernoff bound implies that
  \whp
\[X \ge \left(1- \sqrt{\frac{2\sqrt{n}}{c}}\right)\frac{(n-3\log
    n)\sqrt{\log n}}{\gamma\sqrt{n}}.\]
For $i=n-3\log n, \ldots n-1$ let
\[X_i=\left(1- \sqrt{\frac{2\sqrt{n}}{i}}\right)\frac{(n-3\log
    n)\sqrt{\log n}}{\gamma\sqrt{n}}\]
Now, by the union bound, the probability that the final phases do not
integrate all remaining $3\log n$ nodes is at most
\[\sum_{i=n-3\log n}^{n-1} \left(1 -
    \frac{1}{\gamma}\sqrt{\frac{\log n}{n}}\right)^{\frac{X_i(X_{i}+1)\sqrt{\log n}}{2\sqrt{n}}} \le 3\log n \left(1 -
    \frac{1}{\gamma}\sqrt{\frac{\log n}{n}}\right)^{\frac{(X_{n-3\log n})^2\sqrt{\log n}}{2\sqrt{n}}}.\]
Lemma~\ref{lem:finalLimit} (see Appendix) shows that this term converges to $0$.
\end{proof}

\section{Proof of Theorem~\ref{thm:main-contribution}}
The pre-processing phase lasts $9$ rounds. By Lemma~\ref{lem:ph0} and
\ref{lem:ph1} phases 0 and 1 terminate after $O(\log n)$ rounds \whp
with a cycle with at most $4\log n$ nodes. Each middle phase lasts a
constant number of rounds. According to Lemma~\ref{lem:complLS} after
at $3\log n$ middle phases the cycle $C$ has \whp $n/7$ nodes and by
Lemma~\ref{lem:complUS} after another $13 \log n$ middle phases \whp
$n-3\log n$ nodes. Then in $3\log n$ final phases, each lasting a
constant number of rounds, $C$ is \whp a Hamiltonian cycle by
Lemma~\ref{lem:extInt}. This leads to the total time complexity of
$O(\log n)$ rounds. The statements about message size and memory per
node are evident from the description of \AlgoHC.


\section{Conclusion}
\label{sec:conclusion}
This paper presented an efficient distributed algorithm to compute in
$O(\log n)$ rounds \whp a Hamiltonian cycle for a random graph
$G(n,p)$ provided $p\ge (\log n)^{3/2}/\sqrt{n}$. This constitutes a
large improvement over the state of the art with respect to
$p =c \log n/n^\delta$ ($0<\delta \le 1$) and run time
$\tilde{O}(n^\delta)$. It is well known that $G(n,p)$ contains \whp a
Hamiltonian cycle, provided $p \ge p_{crit}$. There is a large gap
between $(\log n)^{3/2}/\sqrt{n}$ and $p_{crit}$. It appears that by
maxing out the arguments of this paper it is possible to prove
Theorem~\ref{thm:main-contribution} for $p=\sqrt{\log n/n}$. All but
the final phases already work for $p\ge \log n/\sqrt{n}$. We suspect
that finding a distributed $O(\log n)$ round algorithm for
$p\in o(1/\sqrt{n})$ is a hard task.






\section{Acknowledgments}
This work is supported by the Deutsche Forschungsgemeinschaft (DFG)
under grant DFG TU 221/6-2. The author is grateful to the
reviewers' valuable comments that improved the manuscript.

\bibliographystyle{abbrv}
\bibliography{arxiv_ham}  


\appendix

The appendix is divided in two sections. The first section contains
technical results which did not fit into the paper due to space
restrictions. The second section contains well known results without
stating a proof, these are included to make the paper
self-contained.

\section{Technical Lemmas}
\label{sec:technical-lemmas}

\begin{lemma}\label{lem:x2}
  There exists $n_0>0$ such that $(n-c)(1-(1-\frac{1}{n})^{c/2})> c/3$
  for all $0 < c < n/7$ and $n\ge n_0$.
\end{lemma}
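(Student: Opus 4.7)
The plan is to replace the discrete term $(1-1/n)^{c/2}$ by its exponential surrogate and reduce the lemma to a single-variable analytic inequality on a compact interval.

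First I would apply the standard bound $1-x \le e^{-x}$, which gives $(1-1/n)^{c/2} \le e^{-c/(2n)}$ and hence
\[
(n-c)\bigl(1-(1-1/n)^{c/2}\bigr) \;\geq\; (n-c)\bigl(1-e^{-c/(2n)}\bigr).
\]
Setting $\xi = c/n \in (0,1/7)$, it therefore suffices to verify
\[
(1-\xi)(1-e^{-\xi/2}) \;>\; \xi/3 \qquad \text{for all } \xi \in (0,1/7].
\]

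To handle this one-variable inequality I would use the Taylor truncation $e^{-\xi/2} \le 1 - \xi/2 + \xi^2/8$, which is valid for every $\xi \ge 0$ since the Lagrange remainder is $-(\xi^3/48)e^{-\eta/2} \le 0$. This gives $1-e^{-\xi/2} \ge \xi/2 - \xi^2/8$, and therefore
\[
(1-\xi)(1-e^{-\xi/2}) - \xi/3 \;\geq\; (1-\xi)\bigl(\xi/2 - \xi^2/8\bigr) - \xi/3 \;=\; \xi\bigl(1/6 - 5\xi/8 + \xi^2/8\bigr).
\]
The quadratic $\xi^2/8 - 5\xi/8 + 1/6$ has its vertex at $\xi = 5/2$, so it is monotonically decreasing on $[0,5/2]$; at $\xi = 1/7$ it equals $1/392 - 5/56 + 1/6 \approx 0.08 > 0$, and is therefore strictly positive throughout $(0,1/7]$. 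Taking $n_0$ large enough to absorb the negligible multiplicative slack between $(1-1/n)^{c/2}$ and $e^{-c/(2n)}$ then yields the claim.

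The main obstacle is not a conceptual one—this is essentially a calculus verification—but rather the choice of a Taylor truncation sharp enough to survive multiplication by $(1-\xi)$ across the full range of $\xi$. A zeroth-order bound $1-e^{-\xi/2} \ge \xi/2$ would give $(1-\xi)(\xi/2) - \xi/3 = \xi(1/6 - \xi/2)$, which already degenerates near $\xi = 1/3$ and is uncomfortably tight near $\xi = 1/7$; the first-order bound above leaves the needed slack uniformly on $(0,1/7]$.
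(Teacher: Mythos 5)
Your proof is correct, and it takes a genuinely different route from the paper's. The paper takes logarithms, rewrites the claim as $\log\bigl(\tfrac{3n-4c}{3(n-c)}\bigr) > \tfrac{c}{2}\log\bigl(1-\tfrac1n\bigr)$, and compares the $c$-derivatives of the two sides: both sides vanish at $c=0$, and the left derivative dominates the right one up to a crossover point $c_0$ which, via a L'H\^{o}pital computation of $\lim_n n\log(1-\tfrac1n)$, tends to $n(7-\sqrt{33})/8\approx 0.156\,n>n/7$; the asymptotic location of $c_0$ is where the paper's $n_0$ genuinely enters. You instead replace $(1-1/n)^{c/2}$ by the one-sided surrogate $e^{-c/(2n)}$, normalize by $n$, and reduce everything to the $n$-free inequality $\xi\bigl(1/6-5\xi/8+\xi^2/8\bigr)>0$ on $(0,1/7]$, which you verify correctly (the quadratic is decreasing there and positive at $\xi=1/7$). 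Your version is more elementary, avoids all limit arguments, and is in fact stronger: since $1-x\le e^{-x}$ is exact and in the needed direction, the reduction holds for every $n\ge 2$, so your closing sentence about choosing $n_0$ to ``absorb the multiplicative slack'' is vacuous --- there is no slack and no threshold is required. One small slip in your final remark (which does not touch the proof itself): the ``zeroth-order bound'' $1-e^{-\xi/2}\ge \xi/2$ that you dismiss as too weak is actually false, since $1-e^{-t}\le t$; the second-order truncation $1-e^{-\xi/2}\ge \xi/2-\xi^2/8$ that you do use is the correct first polynomial lower bound, not merely a sharper one.
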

\begin{proof}
  Obviously it suffices to prove
  \begin{equation}
    \label{eq:2x}
    \log\left(\frac{3n-4c}{3(n-c)}\right)> \frac{c}{2}
    \log\left(1-\frac{1}{n}\right).    
  \end{equation}
  The derivative of the left side (considering $n$ as a constant)
  is \[\frac{-n}{(3n-4c)(n-c)}\] This is larger than
  $(1/2)\log\left(1-\frac{1}{n}\right)$, the derivative of the right
  side of equation~(\ref{eq:2x}), in the range $(0,d]$ for some $d>0$.
  Then, at least until the derivatives of both sides are equal,
  equation~(\ref{eq:2x}) is satisfied. The solution $c_0$ of the
  equation
  \[\frac{-n}{(3n-4c)(n-c)} = \frac{1}{2}\log\left(1-\frac{1}{n}\right)\]
  is
  \[c_0= \frac{n}{8}\left(7 - \sqrt{1 - \frac{32}{n\log\left( 1 -
          \frac{1}{n}\right)}}\right).\] Using the rule of
  L'H\^{o}pital we have $\lim_{n \rightarrow \infty} n\log\left( 1 -
    \frac{1}{n}\right) = -1$ This implies that for growing $n$ the
  value of $c_0$ approaches $n(7-\sqrt{33})/8\ge 0.156 n$. Thus for
  some $n$ we have $c_0 \ge n/7$. This proves the lemma.
\end{proof}

\begin{lemma}\label{lem:levy} Let $\alpha, \beta > 1$. For $q \ge \log n /\gamma n$
\[ \lim_{n \rightarrow \infty} (1 - (1-q)^{n-\beta\log
  n})^{\alpha\log n} =1.\]
\end{lemma}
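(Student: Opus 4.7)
The plan is to show the lower bound tends to $1$ (the quantity is trivially at most $1$) by controlling the inner term $(1-q)^{n-\beta\log n}$ from above.

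First I would use the standard inequality $1-q \le e^{-q}$ together with the assumed lower bound on $q$ to write
\[
(1-q)^{n-\beta\log n} \le \exp\bigl(-q(n-\beta\log n)\bigr) \le \exp\!\left(-\frac{\log n}{\gamma n}(n-\beta\log n)\right) = n^{-1/\gamma}\,\exp\!\left(\frac{\beta(\log n)^2}{\gamma n}\right).
\]
The exponential factor on the right tends to $1$ as $n\to\infty$, so for every $\epsilon>0$ and all sufficiently large $n$ we have $(1-q)^{n-\beta\log n} \le (1+\epsilon)\,n^{-1/\gamma}$, and in particular this quantity goes to $0$.

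Next I would pass to the outer power. Writing $\eta_n := (1-q)^{n-\beta\log n}$, the lemma reduces to showing $(1-\eta_n)^{\alpha\log n}\to 1$. Taking logarithms and using $\log(1-x) = -x + O(x^2)$ for $x\to 0$,
\[
\alpha\log n \cdot \log(1-\eta_n) = -\alpha\log n \cdot \eta_n \bigl(1+O(\eta_n)\bigr).
\]
From the bound above, $\alpha\log n\cdot \eta_n \le \alpha(1+\epsilon)\,\log n\cdot n^{-1/\gamma}\to 0$ as $n\to\infty$. Hence $\log\bigl((1-\eta_n)^{\alpha\log n}\bigr)\to 0$, so $(1-\eta_n)^{\alpha\log n}\to 1$. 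Together with the trivial upper bound $(1-\eta_n)^{\alpha\log n}\le 1$, this gives the claimed limit.

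There is no real obstacle here: the only thing to be careful about is keeping the direction of inequality correct (a larger $q$ gives a smaller $(1-q)^{n-\beta\log n}$ and hence a larger $1-(1-q)^{n-\beta\log n}$), so the assumed lower bound on $q$ does translate into a lower bound on the quantity of interest. The fact that $\alpha,\beta>1$ plays no role beyond making the expressions meaningful; any constants would do, since $n^{-1/\gamma}$ decays polynomially and swamps the $\log n$ factor in the outer exponent.
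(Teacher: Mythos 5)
Your proof is correct and follows essentially the same route as the paper's: both reduce the problem to showing $\alpha\log n\cdot(1-q)^{n-\beta\log n}\to 0$ via the expansion of $\log(1-x)$ near $0$, and both establish this by bounding the inner term by roughly $n^{-1/\gamma}$ times a factor tending to $1$, which decays fast enough to swamp the $\log n$ in the outer exponent. Your use of the explicit inequality $1-q\le e^{-q}$ is a slightly cleaner packaging of the paper's limit $\log(1-q)/q\to -1$, but the argument is the same.
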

\begin{proof}
Since $\lim_{n \rightarrow \infty} (1-q)^{n-\beta\log n}=0$, hence by
Lemma~\ref{limits} we have
\[\lim_{n \rightarrow \infty} (1 - (1-q)^{n-\beta\log
  n})^{\alpha\log n} = \lim_{n \rightarrow \infty} e^{\alpha\log n
  \log (1 - (1-q)^{n-\beta\log n})} = \lim_{n \rightarrow \infty}
e^{{-\alpha\log n}{(1-q)^{n-\beta\log n}}}.\] Thus, it suffices to
prove $\lim_{n \rightarrow \infty} e^{{-\alpha\log
    n}{(1-q)^{n-\beta\log n}}}= 1$. 
$\lim_{n \rightarrow \infty} \log
n/n = 0$ implies \[\lim_{n \rightarrow \infty} \log (\alpha\log n/n) =
-\infty.\] Thus, $\lim_{n \rightarrow \infty}-\log n + \log (\alpha\log
n) = -\infty$ and hence
  \[\lim_{n \rightarrow \infty}-\log n + \frac{\beta\log ^2 n}{n} +\log (\alpha\log n) = -\infty\]
  \[\lim_{n \rightarrow \infty}-(\log n/\gamma n)\left(n- \beta\log n\right)+\log (\alpha\log n) = -\infty\]
  Since $\lim_{n \rightarrow \infty}\log(1-q)/q = -1$ this yields
    \[\lim_{n \rightarrow \infty}\left(n- \beta\log n\right)\log (1-q)+\log (\alpha\log n) = -\infty\]
    \[\lim_{n \rightarrow \infty}\log\left(\alpha\log n(1-q)^{n- \beta\log n}\right) = -\infty\]
    \[\lim_{n \rightarrow \infty}\alpha\log n(1-q)^{n- \beta\log n} = 0\]
     \[\lim_{n \rightarrow \infty} e^{{-\alpha\log n}{(1-q)^{n-\beta\log n}}}= 1.\]
\end{proof}

  
\begin{lemma}\label{lem:finalLimit}
  \[ \lim_{n \rightarrow \infty} \log n \left(1 -
    \frac{1}{\gamma}\sqrt{\frac{\log n}{n}}\right)^{\frac{(X_{n-3\log n})^2\sqrt{\log n}}{2\sqrt{n}}}= 0.\]
\end{lemma}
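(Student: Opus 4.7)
The plan is to take logarithms and show that the logarithm of the expression tends to $-\infty$. Writing
\[L_n := \log\log n \;+\; \frac{(X_{n-3\log n})^2\sqrt{\log n}}{2\sqrt{n}}\,\log\!\left(1-\frac{1}{\gamma}\sqrt{\frac{\log n}{n}}\right),\]
it suffices to prove $L_n \to -\infty$, since then the original expression (which equals $e^{L_n}$) tends to $0$.

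The first step is to replace the inner logarithm by its leading-order approximation. Since $(1/\gamma)\sqrt{\log n / n}\to 0$, the Taylor expansion $\log(1-x) = -x + O(x^2)$ gives
\[\log\!\left(1-\frac{1}{\gamma}\sqrt{\frac{\log n}{n}}\right) = -\frac{1}{\gamma}\sqrt{\frac{\log n}{n}}\bigl(1+o(1)\bigr),\]
so this factor is, up to the $1+o(1)$ correction, simply $-(1/\gamma)\sqrt{\log n / n}$.

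The second step is to estimate $X_{n-3\log n}$ asymptotically. Since $\sqrt{2\sqrt{n}/(n-3\log n)} = O(n^{-1/4}) \to 0$, the prefactor $1-\sqrt{2\sqrt{n}/(n-3\log n)}$ tends to $1$, and
\[X_{n-3\log n} = \frac{(n-3\log n)\sqrt{\log n}}{\gamma\sqrt{n}}\bigl(1+o(1)\bigr) = \frac{\sqrt{n\log n}}{\gamma}\bigl(1+o(1)\bigr).\]
Squaring, $(X_{n-3\log n})^2 = (n\log n)/\gamma^2 \cdot (1+o(1))$.

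The third step is to combine the two estimates and read off the order of growth. Multiplying through, the second summand of $L_n$ is
\[\frac{(X_{n-3\log n})^2\sqrt{\log n}}{2\sqrt{n}}\cdot\left(-\frac{1}{\gamma}\sqrt{\frac{\log n}{n}}\right)\bigl(1+o(1)\bigr) = -\frac{(\log n)^2}{2\gamma^3}\bigl(1+o(1)\bigr).\]
Therefore $L_n = \log\log n - (\log n)^2/(2\gamma^3) + o((\log n)^2)$, which tends to $-\infty$; exponentiating yields the claim.

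The only real care needed is to make sure the $o(1)$ error terms introduced by Taylor expansion and by the prefactor of $X_{n-3\log n}$ are genuinely negligible compared to $(\log n)^2$; this is immediate since each correction is at most a multiplicative $1+O((\log n/n)^{1/2}) + O(n^{-1/4})$, and the dominant term $(\log n)^2/(2\gamma^3)$ is polylogarithmic, so it is certainly not erased by these relative errors. No step requires more than elementary estimates, so I do not anticipate any serious obstacle.
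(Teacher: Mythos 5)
Your proof is correct and follows essentially the same route as the paper: both reduce the claim to the first-order approximation $\log(1-x)\sim -x$ applied to the base, combined with the asymptotics $X_{n-3\log n}\sim\sqrt{n\log n}/\gamma$. The only difference is that the paper first crudely lower-bounds the exponent by $\sqrt{n}$ to obtain the weaker (but sufficient) bound $\log n\, e^{-\sqrt{\log n}/\gamma}$, whereas you keep the full exponent and obtain the sharper rate $\log n\, e^{-(\log n)^2/(2\gamma^3)(1+o(1))}$.
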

\begin{proof}
Since \[\lim_{n \rightarrow \infty} \left( 1
      -\sqrt{\frac{2\sqrt{n}}{n-3\log n}}\right)^2\frac{(n-3\log
      n)^2}{2\gamma}\left(\frac{\log n}{n}\right)^{1.5} \frac{1}{\sqrt{n}}= \infty\]
we have
\[\log n \left(1 -\frac{1}{\gamma}\sqrt{\frac{\log n}{n}}\right)^{\left( 1
      -\sqrt{\frac{2\sqrt{n}}{n-3\log n}}\right)^2\frac{(n-3\log
      n)^2}{2\gamma}\left(\frac{\log n}{n}\right)^{1.5}} < \log n \left(1 -\frac{1}{\gamma}\sqrt{\frac{\log n}{n}}\right)^{\sqrt{n}}\]
The last term is equal to
\[\log n\, e^{(\sqrt{\log n}/\gamma)\log(1 -\frac{1}{\gamma}\sqrt{\frac{\log n}{n}})/\sqrt{\frac{\log n}{n}}} \le \log n/ e^{(\sqrt{\log n}/\gamma)}.\]
This proves the lemma.  
\end{proof}

\section{Well Known Results} 
\label{appd:proof-number-of-heads}

\begin{lemma}\label{limits}
  Let $f_n$ and $p_n$ be sequences with $\lim_{n \rightarrow \infty} p_n = 0$.
  \begin{enumerate}
  \item $\lim_{n \rightarrow \infty}\log(1-p_n)/p_n = -1$.
  \item If $\lim_{n \rightarrow \infty} p_nf_n = c$ then $\lim_{n \rightarrow \infty} (1-p_n)^{f_n}= e^{-c}$.
  \item If $\lim_{n \rightarrow \infty} p_nf_n = \infty$ then $\lim_{n \rightarrow \infty} (1-p_n)^{f_n}= 0$.
  \end{enumerate}
\end{lemma}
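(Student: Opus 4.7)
The plan is to treat this as three standard calculus facts, deriving (2) and (3) from (1) by the usual trick of writing $(1-p_n)^{f_n} = \exp(f_n \log(1-p_n))$ and factoring out $p_n$ in the exponent.

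For part (1), I would invoke the Taylor expansion $\log(1-x) = -x - x^2/2 - x^3/3 - \cdots$ valid for $|x|<1$. Dividing by $x$ gives $\log(1-x)/x = -1 - x/2 - x^2/3 - \cdots$, and since $p_n \to 0$ all higher-order terms vanish, yielding $\log(1-p_n)/p_n \to -1$. (Alternatively, one L'H\^{o}pital step on $\log(1-x)/x$ at $x=0$ gives the same result.)

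For part (2), I would write
\[
(1-p_n)^{f_n} = \exp\!\bigl(f_n \log(1-p_n)\bigr) = \exp\!\left(p_n f_n \cdot \frac{\log(1-p_n)}{p_n}\right).
\]
By part (1) the second factor in the exponent tends to $-1$, and by hypothesis $p_n f_n \to c$, so the product tends to $-c$. Continuity of $\exp$ then gives the claim. Part (3) is obtained exactly the same way: under $p_n f_n \to \infty$ the exponent tends to $-\infty$, so the expression tends to $0$.

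The proof is essentially mechanical; there is no real obstacle beyond verifying that the factorization step is legal. The only mild care needed is that $p_n$ is eventually in $(0,1)$ so that $\log(1-p_n)$ is defined and the Taylor series converges; this is immediate from $p_n \to 0$. Everything else is continuity of $\exp$ and arithmetic of limits.
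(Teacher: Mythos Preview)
Your argument is correct and entirely standard. The paper itself does not prove this lemma: it is listed in the appendix under ``Well Known Results'' and is explicitly stated without proof, so there is no paper proof to compare against. One minor quibble: you write that $p_n$ is eventually in $(0,1)$, but the hypothesis $p_n\to 0$ only guarantees $|p_n|<1$ eventually (and $p_n\neq 0$ is tacitly assumed so that the quotient makes sense); this does not affect the argument since the Taylor expansion is valid on all of $(-1,1)$.
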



\begin{lemma}[Chernoff Bound]\label{chernoff}
  Let $X_1,\ldots, X_n$ be independent Bernoulli-distributed random
  variables and $X= \sum_{i=1}^n X_i$ with $\mu=E[X]$. Then for all
  $0< \delta \le 1$
\[Prob[X \le (1-\delta)\mu]\le e^{-\mu\delta^2/2}.\]
\end{lemma}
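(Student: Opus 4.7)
The plan is to use the classical Cram\'er--Chernoff method: apply Markov's inequality to an exponential moment of $-X$, use independence of the Bernoulli summands to evaluate the moment generating function in closed form, optimize over the tilt parameter, and then reduce the resulting closed-form bound to the quadratic-exponential form in the statement via an elementary one-variable inequality.

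For the main computation, I would fix $t > 0$ and write
\[\Pr[X \le (1-\delta)\mu] = \Pr\bigl[e^{-tX} \ge e^{-t(1-\delta)\mu}\bigr] \le e^{t(1-\delta)\mu}\,\mathbb{E}[e^{-tX}].\]
Independence yields $\mathbb{E}[e^{-tX}] = \prod_i \mathbb{E}[e^{-tX_i}]$, and with $X_i \sim \mathrm{Bernoulli}(p_i)$ one has $\mathbb{E}[e^{-tX_i}] = 1 + p_i(e^{-t}-1) \le \exp\bigl(p_i(e^{-t}-1)\bigr)$ via $1+x \le e^x$. Multiplying and using $\sum_i p_i = \mu$ bounds the moment generating function by $\exp(\mu(e^{-t}-1))$, so that
\[\Pr[X \le (1-\delta)\mu] \le \exp\bigl(t(1-\delta)\mu + \mu(e^{-t}-1)\bigr).\]
The tilt that minimizes this exponent is $t = -\log(1-\delta) > 0$, and substituting gives the well-known intermediate bound
\[\Pr[X \le (1-\delta)\mu] \le \left(\frac{e^{-\delta}}{(1-\delta)^{1-\delta}}\right)^{\mu}.\]

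To finish, it suffices to show $e^{-\delta}/(1-\delta)^{1-\delta} \le e^{-\delta^2/2}$ for $0 < \delta \le 1$, or equivalently that $g(\delta) := -\delta - (1-\delta)\log(1-\delta) + \delta^2/2 \le 0$ on $(0,1]$. One computes $g(0) = 0$ and $g'(\delta) = \log(1-\delta) + \delta$, which is itself non-positive on $[0,1)$ since $\log(1-\delta) \le -\delta$ (visible from the series $\log(1-\delta) = -\delta - \delta^2/2 - \cdots$). Hence $g$ is non-increasing from $0$, so $g \le 0$, and raising to the power $\mu$ yields the stated bound $e^{-\mu\delta^2/2}$. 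The only mild subtlety is the hypothesis $\delta \le 1$: for $\delta > 1$ the optimal tilt $-\log(1-\delta)$ is undefined, which is precisely why the lemma confines its lower-tail statement to this regime. Otherwise the argument is textbook and presents no real obstacle, which is presumably why the paper includes the bound in its \emph{well-known results} section without proof.
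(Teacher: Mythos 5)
Your proof is correct. Note, however, that the paper itself gives no proof of this lemma: it appears in the appendix section of \emph{well-known results} precisely so that it can be cited without derivation, so there is no argument of the paper's to compare against. Your derivation is the standard Cram\'er--Chernoff route (Markov applied to $e^{-tX}$, factorization of the moment generating function by independence, the bound $1+x\le e^x$, the optimal tilt $t=-\log(1-\delta)$, and the elementary inequality $-\delta-(1-\delta)\log(1-\delta)\le -\delta^2/2$ on $(0,1]$ verified by checking $g(0)=0$ and $g'(\delta)=\log(1-\delta)+\delta\le 0$), and every step checks out, including the boundary case $\delta=1$ where the intermediate bound degenerates to $e^{-\mu}\le e^{-\mu/2}$.
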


\begin{lemma}\label{diameter}
  Let $G(n,p)$ with $p \ge \sqrt{1/n}$. Then \whp $diam(G)\le3$.
\end{lemma}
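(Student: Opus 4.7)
The plan is to bound the probability that a fixed pair $u,v$ has graph distance greater than $3$ by something super-polynomially small, then union-bound over all $\binom{n}{2}$ pairs. The key observation is that for $p \ge 1/\sqrt{n}$ the expected degree is at least $\sqrt{n-1}$, so the neighborhoods of $u$ and $v$ are each of size $\Theta(\sqrt{n})$, and the expected number of edges between them is $\Theta(\sqrt{n})$, which is enough to rule out long-distance pairs by a sharp concentration argument.

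More concretely, fix $u \ne v$. I would expose the edges in two stages. Stage~1: reveal all edges incident to $u$ or to $v$. Let $A = N(u) \setminus \{v\}$ and $B = N(v) \setminus \{u\}$. Since $|A|$ is a sum of $n-2$ independent Bernoulli$(p)$ variables with mean $\mu = (n-2)p \ge (n-2)/\sqrt{n}$, the Chernoff bound (Lemma~\ref{chernoff}) with $\delta = 1/2$ gives
\[
\Pr\!\left[|A| \le \tfrac{\sqrt{n}}{3}\right] \le e^{-\mu/8} \le e^{-\sqrt{n}/10}
\]
for large $n$, and symmetrically for $|B|$. If $A \cap B \ne \emptyset$ then $u,v$ share a common neighbor and $d(u,v) \le 2$, so we are done. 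Otherwise $A$ and $B$ are disjoint subsets of $V \setminus \{u,v\}$, each of size at least $\sqrt{n}/3$.

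Stage~2: now reveal the edges inside $V \setminus \{u,v\}$. These are independent of Stage~1, so conditionally on $|A|, |B| \ge \sqrt{n}/3$ and $A \cap B = \emptyset$, the probability that no $A$--$B$ edge exists is
\[
(1-p)^{|A|\cdot|B|} \le (1-p)^{n/9} \le e^{-pn/9} \le e^{-\sqrt{n}/9}.
\]
If such an edge $(a,b)$ with $a\in A$, $b\in B$ does exist, then $u\,a\,b\,v$ is a path of length $3$, so $d(u,v) \le 3$. Combining the two stages,
\[
\Pr[d(u,v) > 3] \le 2e^{-\sqrt{n}/10} + e^{-\sqrt{n}/9} = e^{-\Omega(\sqrt{n})}.
\]

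Finally, applying a union bound over all $\binom{n}{2} < n^2$ ordered pairs $\{u,v\}$,
\[
\Pr[\mathrm{diam}(G) > 3] \le n^2 \cdot e^{-\Omega(\sqrt{n})} \longrightarrow 0,
\]
which yields $\mathrm{diam}(G)\le 3$ w.h.p. There is no real obstacle here; the only minor bookkeeping point is the two-stage revelation so that the existence of $A$--$B$ edges is independent of the event defining $A$ and $B$, and checking that the Chernoff tail and the no-edge probability are both $o(1/n^2)$ when $p \ge 1/\sqrt{n}$, which holds with room to spare since both are $e^{-\Omega(\sqrt{n})}$.
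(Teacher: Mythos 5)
Your proof is correct, but it takes a genuinely different route from the paper. The paper's proof is a one-line citation: it invokes Corollary 8(i) of Bollob\'as's paper on the diameter of random graphs, checks its three hypotheses at $p=\sqrt{1/n}$ (e.g.\ that $p^3n^2-2\log n\to\infty$ and $p^2n-2\log n\to-\infty$), and implicitly relies on monotonicity of the property $diam(G)\le 3$ to cover all $p\ge\sqrt{1/n}$. You instead give a self-contained first-principles argument: a two-stage edge exposure so that the $A$--$B$ edges are revealed independently of the neighborhoods $A=N(u)\setminus\{v\}$ and $B=N(v)\setminus\{u\}$, a Chernoff lower-tail bound (exactly the paper's Lemma~\ref{chernoff}) showing $|A|,|B|\ge\sqrt{n}/3$ with failure probability $e^{-\Omega(\sqrt{n})}$, the observation that $A\cap B\ne\emptyset$ already gives $d(u,v)\le 2$, the bound $(1-p)^{|A||B|}\le e^{-pn/9}\le e^{-\sqrt{n}/9}$ for the remaining case, and a union bound over the $\binom{n}{2}$ pairs. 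The conditioning is handled correctly (condition on the realization of $A$ and $B$, then use independence of the Stage~2 edges), and the estimates all have ample room since every failure probability is $e^{-\Omega(\sqrt{n})}=o(n^{-2})$. What your approach buys is independence from the external reference and a proof that works uniformly for all $p\ge 1/\sqrt{n}$ without appealing to monotonicity; what the paper's approach buys is brevity and, via the cited result, the sharper statement that the diameter is exactly $3$ at $p=\sqrt{1/n}$, which is more than the lemma needs. The only cosmetic slip is calling the $\binom{n}{2}$ pairs ``ordered''; it does not affect the bound.
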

\begin{proof}
  According to Corollary 8 (i) of \cite{Boll:81} \whp $diam(G)=3$ if
  \begin{itemize}
  \item $({\log n})/3 - 3\log \log n$ converges to $\infty$
  \item $p^3n^2 -2\log n$ converges to $\infty$
  \item $p^2n - 2 \log n$ converges to $-\infty$
  \end{itemize}
This is satisfied for $p = \sqrt{1/n}$.
\end{proof}



\end{document}